\newtheorem{theorem}{Theorem}[section]
\newtheorem{lemma}[theorem]{Lemma}
\newtheorem{claim}[theorem]{Claim}
\newtheorem{corollary}[theorem]{Corollary}
\newtheorem{definition}[theorem]{Definition}
\newcommand{\ignore}[1]{}
\newcommand{\cS}{\mathcal{S}}
\newcommand{\eps}{\varepsilon}
\newcommand{\ceil}[1]{\lceil#1\rceil}
\newcommand{\EX}{\hbox{\bf E}}
\newcommand{\Sec}[1]{\hyperref[sec:#1]{\S\ref*{sec:#1}}} 
\newcommand{\Eqn}[1]{\hyperref[eq:#1]{(\ref*{eq:#1})}} 
\newcommand{\Fig}[1]{\hyperref[fig:#1]{Fig.\,\ref*{fig:#1}}} 
\newcommand{\Tab}[1]{\hyperref[tab:#1]{Tab.\,\ref*{tab:#1}}} 
\newcommand{\Thm}[1]{\hyperref[thm:#1]{Theorem\,\ref*{thm:#1}}} 
\newcommand{\Fact}[1]{\hyperref[fact:#1]{Fact\,\ref*{fact:#1}}} 
\newcommand{\Lem}[1]{\hyperref[lem:#1]{Lemma\,\ref*{lem:#1}}} 
\newcommand{\Prop}[1]{\hyperref[prop:#1]{Prop.~\ref*{prop:#1}}} 
\newcommand{\Cor}[1]{\hyperref[cor:#1]{Corollary~\ref*{cor:#1}}} 
\newcommand{\Conj}[1]{\hyperref[conj:#1]{Conjecture~\ref*{conj:#1}}} 
\newcommand{\Def}[1]{\hyperref[def:#1]{Definition~\ref*{def:#1}}} 
\newcommand{\Alg}[1]{\hyperref[alg:#1]{Alg.~\ref*{alg:#1}}} 
\newcommand{\Ex}[1]{\hyperref[ex:#1]{Ex.~\ref*{ex:#1}}} 
\newcommand{\Clm}[1]{\hyperref[clm:#1]{Claim~\ref*{clm:#1}}} 
\newcommand{\Obs}[1]{\hyperref[obs:#1]{Obs.~\ref*{obs:#1}}} 
\newcommand{\ind}{C}
\newcommand{\nonind}{N}
\newcommand{\samp}{k}
\newcommand{\sample}{{\tt sample}}
\newcommand{\samplecent}{{\tt sample-centered}}
\newcommand{\estimate}{{\tt 3-path-sampler}}
\newcommand{\estcent}{{\tt centered-sampler}}
\newcommand{\nice}{N} 
\newcommand{\ice}{C} 
\newcommand{\aice}{\widehat{\ice}} 
\begin{document}

\title{Path Sampling: A Fast and Provable Method for Estimating 4-Vertex Subgraph Counts\titlenote{This work was funded by the GRAPHS Program at DARPA and by the Applied Mathematics Program at the U.S.\@ Department of Energy. Sandia National Laboratories is a multi-program laboratory managed and operated by Sandia Corporation, a wholly owned subsidiary of Lockheed Martin Corporation, for the U.S. Department of Energy's National Nuclear Security Administration under contract DE-AC04-94AL85000.}}
\numberofauthors{3} 
\author{
\alignauthor
Madhav Jha\\
       \affaddr{Sandia National Laboratories}\\
       \affaddr{Livermore, CA 94550}\\
       \email{mjha@sandia.gov}
\alignauthor
C. Seshadhri\\
       \affaddr{Sandia National Laboratories}\\
       \affaddr{Livermore, CA 94550}\\
       \email{scomand@sandia.gov}
\alignauthor
Ali Pinar\\
       \affaddr{Sandia National Laboratories}\\
       \affaddr{Livermore, CA 94550}\\
       \email{apinar@sandia.gov}
}

\newcommand{\Sesh}[1]{{\color{red} Sesh says: #1}}

\maketitle
\begin{abstract}
Counting the frequency of small subgraphs is a fundamental technique in network analysis across various domains, most notably in bioinformatics and social networks. The special case of triangle counting has received much attention. Getting results for 4-vertex patterns is highly challenging, and there are few practical results known that can scale to massive sizes. Indeed, even a highly tuned enumeration code takes more than a day on a graph with millions of edges. 
Most previous work that runs for truly massive graphs employ clusters and massive parallelization.

We provide a sampling algorithm that provably and accurately approximates the frequencies of all 4-vertex pattern subgraphs. Our algorithm is based on a novel technique of \emph{3-path sampling} and a special pruning scheme to decrease the variance in estimates. We provide theoretical proofs for the accuracy of our algorithm, and give formal bounds for the error and confidence of our estimates. We perform a detailed empirical study and show that our algorithm provides estimates within 1\% relative error for all subpatterns (over a large class of test graphs), while being orders of magnitude faster than  enumeration  and other sampling based algorithms. 
Our algorithm takes less than a minute (on a single commodity machine) to process an Orkut social network with 300 million edges.
\end{abstract}

\section{Introduction}

Counting the number of occurrences of small subgraphs in a graph is a fundamental network analysis technique used across diverse domains: bioinformatics, social sciences, and 
infrastructure networks studies~\cite{HoLe70,Co88,Po98,Milo2002,Burt04,PrzuljCJ04,Fa07,HoBe+07,BeBoCaGi08,Fa10,SzTh10,FoDeCo10,SonKanKim12}. The subgraphs whose counts are desired are variously referred as ``pattern subgraphs,'' ``motifs,'' or ``graphlets.'' It is repeatedly observed that certain small subgraphs occur substantially more often in real-world networks than in a randomly generated network~\cite{HoLe70,WaSt98,Milo2002}. Motifs distributions have been used in bioinformatics to evaluate network models~\cite{PrzuljCJ04,HoBe+07}. Analysis of triadic (3-vertex) motifs has a long
history in social network analysis and modeling~\cite{HoLe70,Burt04,Fa07,SeKoPi11,DuPiKo12}.
Work in the data mining community has applied motif frequencies for spam detection and
group classification of sets of nodes~\cite{BeBoCaGi08,UganderBK13}.

The main challenge of motif counting is combinatorial explosion. Even in a moderately sized graph with
millions of edges, the subgraph counts (even for 4-vertex patterns) is in the billions.
Any exhaustive enumeration method (no matter how cleverly designed) is forced to touch each occurrence
of the subgraph, and cannot truly scale. One may apply massive parallelism to counteract this problem, but that does
not avoid the fundamental combinatorial explosion. An alternative approach is based on \emph{sampling}. Here,
we try to count the number of subgraphs using a randomized algorithm. The difficulty is in designing
a fast algorithm that also provides an accurate estimate. The holy grail is to get
mathematically provable bounds on accuracy with quantifiable error bars.

Sampling approaches have been employed for triangle counting with good success~\cite{ScWa05-2,TsDrMi09,TsKaMiFa09,TsKoMi11,SePiKo13}. There also exists work
for counting larger motifs, as we shall discuss later. Most methods (especially in bioinformatics)~\cite{HoBe+07,BetzlerBFKN11,WongBQH12,RaBhHa14}
work for graphs of at most 100K edges, much smaller than
the massive social networks we encounter.

\begin{figure}[t]
    \centering
    \begin{subfigure}[b]{0.15\textwidth}
        \centering
        \includegraphics{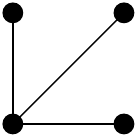}
        \caption{3-star}
        \label{fig:p2}
    \end{subfigure}
    \hfill
    \begin{subfigure}[b]{0.15\textwidth}
        \centering
        \includegraphics{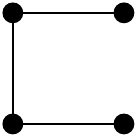}
        \caption{3-path}
        \label{fig:p1}
    \end{subfigure}
    \hfill
    \begin{subfigure}[b]{0.15\textwidth}
        \centering
        \includegraphics{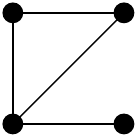}
        \caption{tailed-triangle}
        \label{fig:p3}
    \end{subfigure}
      \begin{subfigure}[b]{0.15\textwidth}
        \centering
        \includegraphics{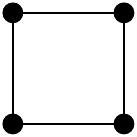}
        \caption{4-cycle}
        \label{fig:p4}
    \end{subfigure}
    \hfill
    \begin{subfigure}[b]{0.15\textwidth}
        \centering
        \includegraphics{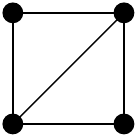}
        \caption{\footnotesize{chordal-4-cycle}}
        \label{fig:p5}
    \end{subfigure}
    \hfill
    \begin{subfigure}[b]{0.15\textwidth}
        \centering
        \includegraphics{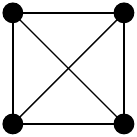}
        \caption{4-clique}
        \label{fig:p6}
    \end{subfigure}
    \caption{List of all connected 4-vertex motifs}
    \label{fig:4-vertex-motifs}
\end{figure}

\begin{figure*}[t]
		\centering
    \begin{subfigure}[b]{0.45\textwidth}
        \centering
    	\includegraphics[width=\textwidth]{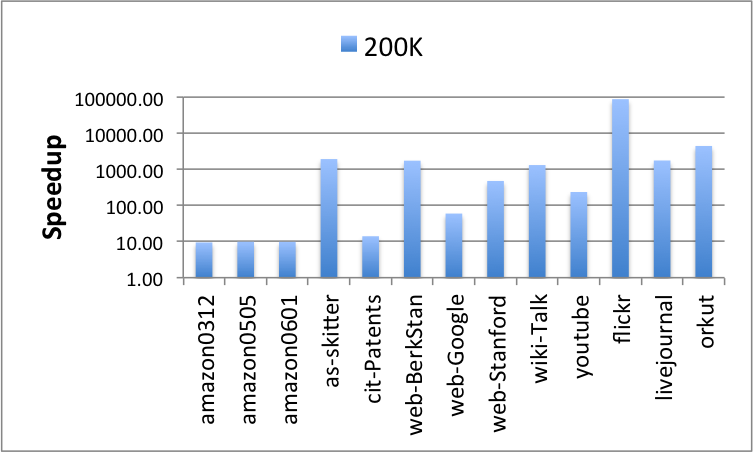}
   	\caption{Speedup}
    \label{fig:speedup}
    \end{subfigure}
   	\hfill
    \begin{subfigure}[b]{0.45\textwidth}
        \centering
    	\includegraphics[width=\textwidth]{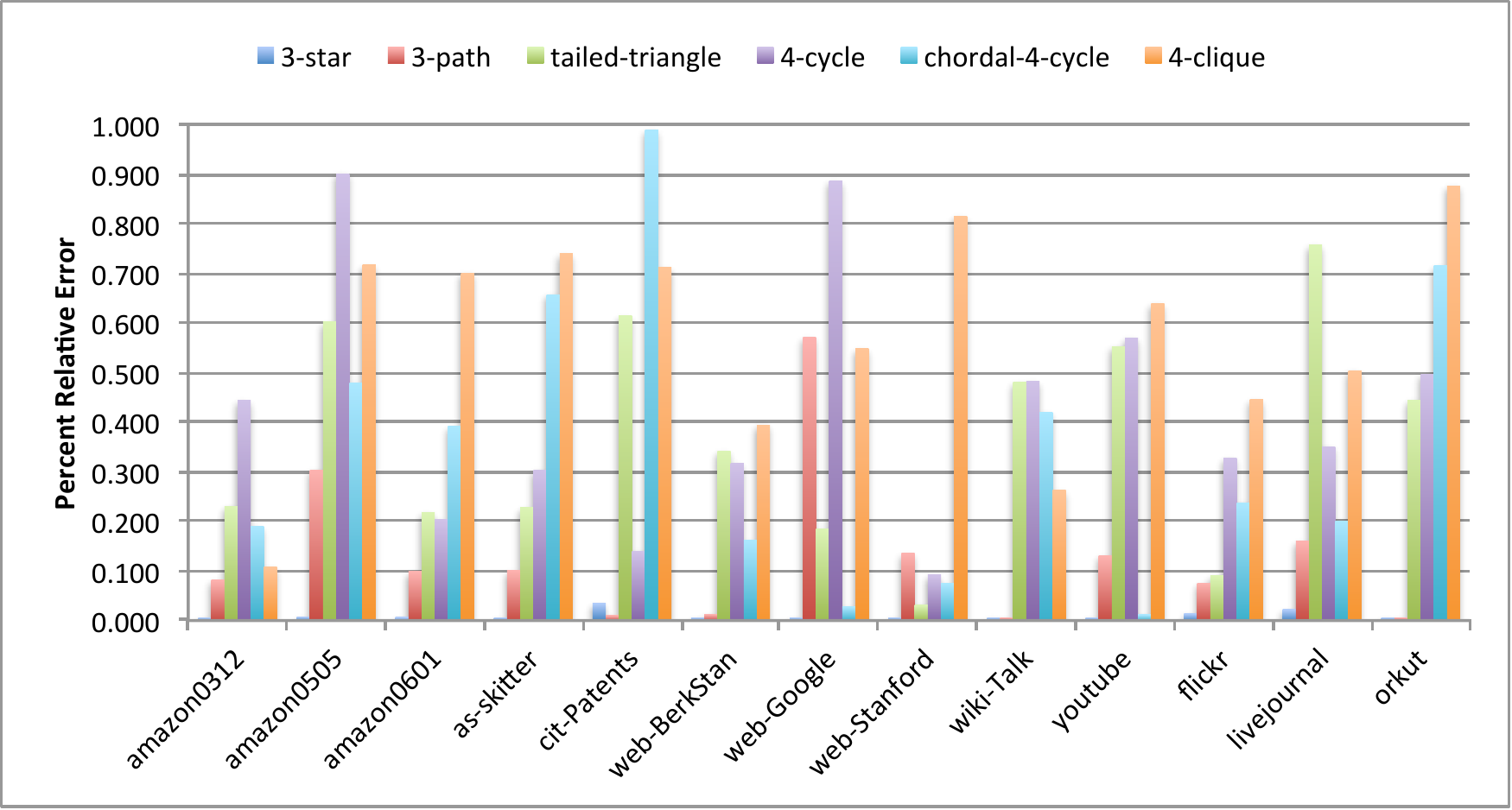}
   	\caption{Relative error}
    \label{fig:error}
    \end{subfigure}
    
  \caption{Summary of 3-path sampling algorithm behavior over a large variety of datasets: The left figure shows speedup over a tuned enumeration code. The right figure
   shows the relative error of each estimate, which is always less than 1\% (and mostly much smaller).}
\end{figure*}

\subsection{The main problem}

We focus on estimating frequency of all connected 4-vertex subgraphs on massive input graphs. There are six connected 4-vertex graphs (\Fig{4-vertex-motifs}): (i)  the 3-path, (ii) the 3-star, (iii) the tailed-triangle, (iv) the 4-cycle, (v) the chordal-4-cycle, and (vi) the 4-clique. Throughout this work, we refer to these motifs by their numbering in this list. For example the ``6-th motif'' is the 4-clique.

\emph{Our aim is to give an accurate
and fast estimate of all 4-vertex subgraph counts.}
Triadic analysis is now a standard aspect of network analysis. 
Recent work of Ugander et al~\cite{UganderBK13} specifically use 4-vertex pattern counts to 
provide a ``map" of egonets, and show significant patterns among these counts. 
Such analyses require fairly precise frequency counts. 

\subsection{Related Work} \label{sec:related}

Motif counting for bioinformatics was arguably initiated by a 
seminal paper of Milo {\em et al.}~\cite{Milo2002}. This technique has been
used for graph modeling~\cite{PrzuljCJ04,HoBe+07}, graph comparisons~\cite{PrzuljCJ04,HalesA08}, and even decomposing a network~\cite{Itzkovitz_2005}.
Refer to \cite{BetzlerBFKN11,WongBQH12} for more details. 

Triangle counting has a rich history in social sciences and related analyses, that
we simply refer the reader to the related work sections of~\cite{TsKoMi11,SePiKo13}.
The significance of 4-vertex patterns was studied in recent work of Ugander et al.~\cite{UganderBK13},
who propose a ``coordinate system'' for graphs based on the motifs distribution. This is used
for improved network classification, and the input graphs were comparatively small (thousands of vertices).

Previous studies tailored to 4-vertex patterns~\cite{GonenS09,MarcusS10} provide both exact and approximation algorithms. However, the asymptotic bounds in these graphs are far from practical, and they are only applied to small graphs. For example, a graph with 90K edges requires 40 minutes of processing~\cite{MarcusS10}. Color coding~\cite{AlYuZw94}, Monte-Carlo Markov Chain sampling~\cite{BhRaRa+12}, and edge sampling to speedup edge iteration based algorithms~\cite{RaBhHa14} have been adopted to count patterns in graphs. We will provide `detailed comparisons with these methods in~\Sec{compare}.   To exploit more powerful computing platforms, incremental pattern building algorithms
for Map-Reduce have been described in~\cite{Pl12,BhHa13}. 

Most relevant to this work are previous studies on \emph{wedge sampling}~\cite{ScWa05-2,SePiKo13,KoPiPlSe13}. This method samples paths of length 2 to estimate
various triangle statistics. Our method of 3-path sampling can be seen as building
on wedge sampling. We employ new path pruning techniques to improve the algorithm's efficiency. 
These pruning techniques are inspired by degeneracy ordering algorithms
for triangle counting~\cite{ChNi85,ScWa05}. We can actually provide mathematical error bars for real
runs and instances (as opposed to just a theoretical proof of convergence of estimate).

\subsection{Summary of our contributions}
We design a new randomized algorithm, based on \emph{3-path sampling}, that outputs accurate
estimates of all 4-vertex subgraphs counts. The algorithm is provably correct and makes
no distributional assumption on the graph. All probabilities are over the internal randomness
of the algorithm itself (which is independent of the instance). 
We run detailed simulations on a large variety of datasets, including product co-purchasing networks,
web networks, autonomous systems networks, and social networks. All experiments are
done on a single commodity machine using 64GB memory.

\begin{asparaitem}
\item[\textbf{Extremely fast.}] Our algorithm relies on a sampling based approach making it extremely fast even on very large graphs. Indeed, there are instances where a finely tuned enumeration code takes almost a day to compute counts of 4-vertex motifs whereas our algorithm only takes less than a minute to output accurate estimates. Refer to \Fig{speedup} for speedup over a well-tuned enumeration code.
Our algorithm takes less a minute on an Orkut social network with 200 million edges, where the total count of each motif is over a billion (and most counts are over 10 billion). 
An input Flickr social network has more than 10 billion 4-cliques; we get estimate of this 
number with less than $0.5\%$ error within 30 seconds on a commodity machine. We do not preprocess any of the graphs,
and simply read them as a list of edges.

\item[\textbf{Excellent empirical accuracy.}] We empirically validate our algorithm on a large
variety of datasets, and it consistently gives extremely accurate answers. Refer to \Fig{error}.
We get $< 1\%$ relative error \emph{for all subgraph counts on all datasets}, even those with more than 100M edges. (Exact counts were obtained by brute-force enumerations that took several days.)
This is much more accurate than any existing method to count such motifs.
We compare with existing sampling methods, and demonstrate that our algorithm is faster and more accurate than
the state-of-the-art.

\item[\textbf{Provable guarantees with error bars.}] Our algorithm has a provable guarantee
on accuracy and running time. Furthermore, we can quantify the accuracy/confidence on real
inputs and runs of our algorithm. For a given number of samples, we can have a method to 
put an explicit error bar on our estimate, based on asymptotically tight versions of Chernoff's bound.
While these error bars are not as tight as the real errors in \Fig{error}, we can still
mathematically prove that the errors are mostly within 5\% and always within 10\%.

\item[\textbf{Trends in 4-vertex pattern counts:}] Given the rapid reporting of 4-vertex
pattern counts, our algorithm can be used as a tool for motif analysis. We detect common
trends among a large variety of graphs. Not surprisingly, the 3-star is the most frequent
4-vertex motif in all graphs we experimented upon. The least frequent is either the 4-cycle
or the 4-clique. The chordal-4-cycle frequency is always more than that of the 4-cycle or 4-clique.
Ugander et al~\cite{UganderBK13} study what trends are merely implied by graph
theory, and what are actually features of real-world graphs. Such analyses require
accurate estimates quickly, which our algorithm can provide. It is a promising direction to use
our algorithm to provide more input to such studies.
\end{asparaitem}

\section{Formal description of the problem}\label{sec:notation}

Our input is an undirected simple graph $G=(V,E)$, with $n$ vertices and $m$ edges.
For vertex $v$, $d_v$ is the degree of $v$.

It is critical to distinguish subgraphs from \emph{induced subgraphs}.
A subgraph is simply some subset of edges. An induced subgraph is obtained
by taking a subset $V'$ of vertices, and consider \emph{all edges}
among these vertices. Refer to \Fig{example-graph}. The edges $(v_1,v_2),(v_2,v_3),(v_3,v_4),(v_4,v_1)$
form a 4-cycle, but the vertex set $\{v_1,v_2,v_3,v_4\}$ induces a chordal-4-cycle.
We collectively refer to the 4-vertex subgraphs as ``motifs".

It is technically convenient to think of induced subgraph counts. We 
denote the number of induced occurrences of the $i$-th subgraph (of \Fig{4-vertex-motifs})
by $\ind_i$. So, $\ind_4$ is the number of induced 4-cycles in $G$,
which is the number of distinct subsets of 4 vertices that induce a 4-cycle.
When we talk of a ``vanilla" subgraph, we mean the usual subgraph setting
(a subset of edges). In general, if we do not say ``induced", we mean
vanilla.

\emph{Our aim is to get an estimate of all $\ind_i$ values.} Let $\nonind_i$
denote the number of (vanilla) subgraph occurrences of the $i$th subgraph,
There is a simple linear relationship between induced and non-induced counts,
given below. The $(i,j)$ entry of the matrix $A$ below 
is simply the number of distinct copies of the $i$th
subgraph in the $j$th subgraph (so $A_{2,4} = 4$, the number of 3-paths
in the 4-cycle).
\begin{align}
\begin{split}
\begin{pmatrix}
1 & 0 & 1 & 0 & 2 & 4 \\
0 & 1 & 2 & 4 & 6 & 12 \\
0 & 0 & 1 & 0 & 4 & 12 \\
0 & 0 & 0 & 1 & 1 & 3 \\
0 & 0 & 0 & 0 & 1 & 6 \\
0 & 0 & 0 & 0 & 0 & 1 
 \end{pmatrix} 
 \cdot \begin{pmatrix} \ice_1 \\ \ice_2 \\ \ice_3 \\ \ice_4 \\ \ice_5 \\ \ice_6 \end{pmatrix} 
 &= 
  \begin{pmatrix} \nice_1 \\ \nice_2 \\ \nice_3 \\ \nice_4 \\ \nice_5 \\ \nice_6 \end{pmatrix}. 
  \end{split}
 \label{eq:matrix}
 \end{align}

\begin{figure}
    \centering
    	\includegraphics[scale=0.75]{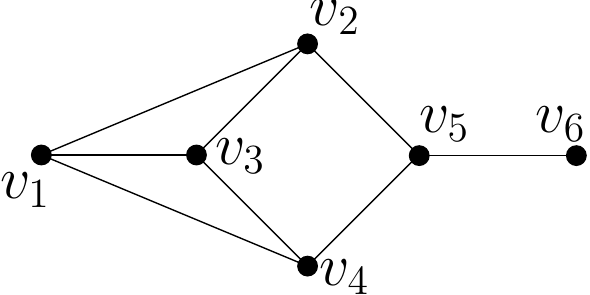}
  \caption{An example graph.}
  \label{fig:example-graph}
\end{figure}

\section{The basic algorithm: estimating counts via 3-path sampling}

Our algorithm for estimating counts of 4-vertex motifs is based on 3-path sampling. 
In this section, we discuss a basic version of this method. In the next section,
we enhance it to improve accuracy.

We begin with a simple procedure that samples a uniform (vanilla) random 3-path.
For each edge $e=(u,v) \in E$, denote $\tau_e = (d_u-1)(d_v-1)$.
We denote $W = \sum_e \tau_e$.

\begin{algorithm}
 \caption{\sample}\label{alg:sample}
 \DontPrintSemicolon
 Compute $\tau_e$ for all edges and set $p_e = \tau_e/W$.\;
 Pick edge $e = (u,v)$ with probability $p_e$.\label{step2}\;
 Pick uniform random neighbor $u'$ of $u$ \emph{other than $v$}.\;
 Pick uniform random neighbor $v'$ of $v$ \emph{other than $u$}.\;
 Output the three edges $\{(u',u),(u,v),(v,v')\}$.
\end{algorithm}

Observe that the output of \sample{} can either be a triangle (if $u' = v'$) or a 3-path. 

\begin{claim} \label{clm:sample} Fix any 3-path. The probability
that \sample{} outputs this 3-path is exactly $1/W$.
\end{claim}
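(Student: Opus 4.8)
The plan is to fix an arbitrary 3-path $P$ on vertices $a - b - c - d$ (with edges $(a,b), (b,c), (c,d)$) and compute the probability that \sample{} outputs exactly this set of three edges. The key observation is that a 3-path has a natural "central edge" — here $(b,c)$ — and the algorithm's first step picks the central edge, then extends on each side. So the event "\sample{} outputs $P$" occurs precisely when step~\ref{step2} picks $e = (b,c)$, the random neighbor of $b$ is $a$, and the random neighbor of $c$ is $d$ (or the symmetric labeling, but since the edge $(b,c)$ is unordered, this is a single event, not two). I would first note that $P$ has a \emph{unique} central edge, so there is no overcounting to worry about.

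First I would write out the probability as a product. Conditioned on picking edge $e=(b,c)$ in step~\ref{step2}, vertex $b$ plays the role of $u$ and $c$ plays the role of $v$ (the roles are symmetric in the unordered edge, so I can fix this assignment). The probability of then picking neighbor $a$ of $b$ other than $c$ is $1/(d_b - 1)$, and independently the probability of picking neighbor $d$ of $c$ other than $b$ is $1/(d_c - 1)$. Hence
\begin{align*}
\prob[\sample \text{ outputs } P] = p_{(b,c)} \cdot \frac{1}{d_b - 1} \cdot \frac{1}{d_c - 1} = \frac{\tau_{(b,c)}}{W} \cdot \frac{1}{(d_b-1)(d_c-1)}.
\end{align*}
Now substitute $\tau_{(b,c)} = (d_b - 1)(d_c - 1)$ from the definition, and the factors cancel, leaving exactly $1/W$.

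The only subtle point — and the one I would be most careful about — is making sure that no \emph{other} choice of central edge or endpoint assignment can also produce the same 3-path $P$. The central edge of a 3-path is the unique edge both of whose endpoints have degree $2$ within the path, so picking any edge other than $(b,c)$ in step~\ref{step2} cannot yield $P$ (picking $(a,b)$, say, would force the output to be a 3-path centered at $(a,b)$, i.e., with $a$ and $b$ each having one more path-neighbor, which is a different edge set). And for the central edge $(b,c)$, the unordered pair means the assignment of $\{b,c\}$ to $\{u,v\}$ is determined up to a relabeling that does not change the output edge set — so the event is counted exactly once. This confirms there is no hidden factor of $2$, and the computation above gives the claimed probability $1/W$.
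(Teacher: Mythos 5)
Your proof is correct and follows essentially the same route as the paper's: condition on the middle edge being selected (probability $\tau_e/W$), multiply by the $1/(d_u-1)$ and $1/(d_v-1)$ probabilities for the two endpoint extensions, and cancel. The extra care you take about the uniqueness of the central edge and the unordered labeling of $\{u,v\}$ is sound but not a departure from the paper's argument.
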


\begin{proof} Fix a 3-path $(u',u),(u,v),(v,v')$ ($u, u', v, v'$ are all distinct). 
The probability that $e = (u,v)$ is selected as the middle edge (in Step~\ref{step2})
is exactly $(d_u - 1)(d_v-1)/W$. Conditioned on this event, the probability
that $u'$ is selected as a neighbor of $u$ is $1/(d_u-1)$ (note that the neighbor $v$
is excluded). Similarly, $v'$ is selected with probability $1/(d_v-1)$.
Putting it all together, the 3-path is chosen with 
probability $[(d_u - 1)(d_v-1)/W] \cdot [1/(d_u-1)] \cdot [1/(d_v-1)] = 1/W$.
The probability is the same for all 3-paths, proving our claim.
\end{proof}

All motifs of \Fig{4-vertex-motifs}, except the 3-star,
contain a 3-path. So one can perform the following experiment. Run \sample{}
to get a collection of edges, and hence a set of (at most $4$) vertices.
Check the motif induced by this set of vertices.
Repeat this experiment many times to 
estimate the true counts $\ind_i$ ($i \in [2,6]$).
Finally, we use the formula of \Eqn{matrix} to estimate $\ind_1$.
This is exactly the algorithm \estimate, as given in \Alg{basic}.
We remind the reader that $A_{2,i}$ is the number of 3-paths in
the $i$th motif.

\begin{algorithm}
 \caption{\estimate \newline Input: graph $G = (V, E)$, samples $\samp$}\label{alg:basic}
 \DontPrintSemicolon
 Run \sample{} $k$ times to get $k$ sets of edges. Let $S_\ell$ denote
 the set of corresponding vertices for the $\ell$th set.\;
 Initialize $count_i = 0$ for $i \in [2,6]$.\;
 For $\ell \in [1,k]$,\;
 \ \ \ \ Determine subgraph induced by $S_\ell$.\;
 \ \ \ \ If this is the
 $i$th motif, increment $count_i$.\;
 For each $i \in [2,6]$,\;
 \ \ \ \ Set $\widehat{\ind}_i = (count_i/\samp)\cdot(W/A_{2,i})$.\;
 Set $\nonind_1 = \sum_{v \in V} {d_v \choose 3}$.\;
 Set (induced 3-stars) $\widehat{\ind}_1 = \nonind_1 - \widehat{\ind}_3
 - 2\widehat{\ind}_5 - 4\widehat{\ind}_6$.
 \end{algorithm}

We prove that \estimate{} outputs unbiased estimates for all $\ind_i$s. 

\begin{theorem} \label{thm:exp}
For every $i \in [1,6]$, $\EX[\aice_i] = \ice_i$.
\end{theorem}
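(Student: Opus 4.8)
The plan is to first prove the claim for $i \in [2,6]$, where the estimator $\widehat{\ice}_i = (count_i/\samp)\cdot(W/A_{2,i})$ is a scaled sample average, and then deduce the $i=1$ case from the linear relation in \Eqn{matrix}. For a fixed run $\ell \in [1,\samp]$, let $X_\ell^{(i)}$ be the indicator that the vertex set $S_\ell$ returned by \sample{} induces the $i$th motif. By \Clm{sample}, each 3-path is output with probability exactly $1/W$; every 4-vertex subset that induces the $i$th motif (for $i \in [2,6]$) contains exactly $A_{2,i}$ distinct 3-paths as subgraphs, so the probability that \sample{} produces a 3-path lying inside a \emph{fixed} such 4-set is $A_{2,i}/W$, and since \sample{} recovers all four vertices whenever $u'\neq v'$, we get $\EX[X_\ell^{(i)}] = \ice_i \cdot A_{2,i}/W$. (One should note that when the sampled 3-path closes into a triangle, $S_\ell$ has only three vertices and contributes to none of the $count_i$, $i\in[2,6]$; this is harmless because such samples are exactly the ones not lying in a 4-vertex motif copy.) Summing over $\ell$ and using linearity, $\EX[count_i] = \samp \cdot \ice_i A_{2,i}/W$, hence $\EX[\widehat{\ice}_i] = (1/\samp)(W/A_{2,i})\cdot \samp \cdot \ice_i A_{2,i}/W = \ice_i$.

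For $i=1$, first observe that $\nonind_1 = \sum_{v} \binom{d_v}{3}$ is a deterministic, exact count of (vanilla) 3-stars: each 3-star is determined by a center $v$ and an unordered triple of its neighbors. Reading off the first row of the matrix in \Eqn{matrix} gives the identity $\ice_1 + \ice_3 + 2\ice_5 + 4\ice_6 = \nonind_1$, i.e. $\ice_1 = \nonind_1 - \ice_3 - 2\ice_5 - 4\ice_6$. The algorithm sets $\widehat{\ice}_1 = \nonind_1 - \widehat{\ice}_3 - 2\widehat{\ice}_5 - 4\widehat{\ice}_6$, so taking expectations and using $\EX[\widehat{\ice}_i]=\ice_i$ for $i\in\{3,5,6\}$ together with linearity of expectation yields $\EX[\widehat{\ice}_1] = \nonind_1 - \ice_3 - 2\ice_5 - 4\ice_6 = \ice_1$.

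The one step that needs genuine care — the ``main obstacle,'' though it is more of a bookkeeping point than a deep difficulty — is the combinatorial identity $\EX[X_\ell^{(i)}] = \ice_i A_{2,i}/W$. This requires being precise that \emph{every} 3-path subgraph of a 4-vertex induced copy of motif $i$ is a 3-path on the full 4-set (so that \sample{} outputs all four vertices and the induced subgraph is correctly identified as the $i$th motif), and that distinct (induced 4-set, 3-path-inside-it) pairs correspond to distinct 3-paths of $G$, so there is no double counting. The entries $A_{2,i}$ are exactly the number of such 3-paths per copy, which is why the same matrix that converts induced counts to vanilla counts also provides the right normalization here. Once this is nailed down, everything else is linearity of expectation and \Clm{sample}.
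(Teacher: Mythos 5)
Your proposal is correct and follows essentially the same route as the paper: for $i\in[2,6]$ you use \Clm{sample} plus the count of $A_{2,i}\cdot\ice_i$ distinct 3-paths lying in induced copies of motif $i$ to get $\EX[X_\ell]=\ice_i A_{2,i}/W$, then apply linearity; for $i=1$ you use the exactness of $\nonind_1$ and the first row of \Eqn{matrix}. The extra remarks about the triangle-closing case and the absence of double counting are correct refinements of the same argument, not a different approach.
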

\begin{proof} First, let use deal with subgraphs other than the 3-star, so fix some $i \neq 1$.
For each $\ell \in [k]$, let $X_\ell$ be the indicator random variable for $S_\ell$ inducing
the $i$th motif. So $X_\ell$ is $1$ iff the $\ell$th call to \sample{} outputs
a 3-path contained in a copy of the $i$th motif. The total number of (distinct) such 3-paths
is exactly $A_{2,i}\cdot\ind_i$. By \Clm{sample}, the probability that $X_\ell = 1$
is $A_{2,i}\cdot\ind_i/W$. Hence, $\EX[X_\ell] = \ind_i\cdot A_{2,i}/W$.

We have $\EX[count_i] =$ $\sum_{\ell = 1}^\samp \EX[X_\ell]$
$= (\samp\ind_i A_{2,i})/W$, by linearity of expectation. Hence, $\EX[\aice_i] = \ind_i$.
Now, we detail with $\aice_1$. Note that $N_1$, the number of 3-star subgraphs,
is exactly $\sum_{v \in V} {d_v \choose 3}$.
By linearity of expectation,
$\EX[\widehat{\ind}_1] = \nonind_1 - \EX[\widehat{\ind}_3]
 - 2\EX[\widehat{\ind}_5] - 4\EX[\widehat{\ind}_6]$,
 which is $\nonind_1 -{\ind}_3 - 2{\ind}_5 - 4{\ind}_6$ $=\ind_1$
(as given by \Eqn{matrix}).
\end{proof}

We can also prove concentration results using the Hoeffding bound~\cite{Ho63}. This is useful as a proof
of concept, but does not give useful bounds in practice. (We give more details
later.) This analysis
is analogous to that of wedge sampling results~\cite{ScWa05,SePiKo13}. 

\begin{theorem}[Hoeffding \cite{Ho63}]
  \label{thm:hoeff}
  Let $X_1, X_2, \dots, X_k$ be independent random variables with $0
  \leq X_i \leq 1$ for all $i=1,\dots,k$.  Define $\bar X =
  \frac{1}{k} \sum_{i=1}^k X_i$. Let $\mu = \EX[\bar X]$. 
  Then for $\eps \in (0,1)$, we have
  \begin{displaymath}
    \Pr[ |\bar X - \mu | \geq \eps ] \leq 2 \exp(-2 k \eps^2).
  \end{displaymath}
\end{theorem}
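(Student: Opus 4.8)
The plan is to prove \Thm{hoeff} by the classical Cram\'er--Chernoff exponential-moment method, reducing the two-sided tail bound to a bound on the moment generating function of a bounded, mean-zero random variable. First I would handle the upper tail. Set $Y_i = X_i - \EX[X_i]$, so that each $Y_i$ has mean zero, takes values in an interval of length $1$, and the $Y_i$ remain independent. For any $t > 0$, Markov's inequality applied to the nonnegative random variable $\exp(t\sum_{i=1}^k Y_i)$ gives
\begin{displaymath}
\Pr\left[\bar X - \mu \geq \eps\right] = \Pr\left[\sum_{i=1}^k Y_i \geq k\eps\right] \leq e^{-tk\eps}\,\EX\left[e^{t\sum_{i=1}^k Y_i}\right] = e^{-tk\eps}\prod_{i=1}^k \EX\left[e^{tY_i}\right],
\end{displaymath}
where the final equality uses independence.

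The key ingredient is Hoeffding's lemma: if a random variable $Y$ satisfies $\EX[Y] = 0$ and $a \leq Y \leq b$, then $\EX[e^{tY}] \leq \exp(t^2(b-a)^2/8)$ for all $t$. I would prove this from convexity of $x \mapsto e^{tx}$: for $x \in [a,b]$ we have $e^{tx} \leq \frac{b-x}{b-a}e^{ta} + \frac{x-a}{b-a}e^{tb}$, so taking expectations and using $\EX[Y]=0$ bounds $\EX[e^{tY}]$ by $e^{\psi(u)}$, where $u = t(b-a)$ and $\psi(u) = -pu + \log(1-p+pe^{u})$ with $p = -a/(b-a) \in [0,1]$. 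One checks $\psi(0) = \psi'(0) = 0$ and that $\psi''(u) = q(1-q) \leq 1/4$, where $q = pe^u/(1-p+pe^u) \in [0,1]$; Taylor's theorem then yields $\psi(u) \leq u^2/8$, which is the claimed bound. Applying this with $b - a = 1$ gives $\EX[e^{tY_i}] \leq e^{t^2/8}$ for each $i$.

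Plugging back in, $\Pr[\bar X - \mu \geq \eps] \leq \exp(-tk\eps + kt^2/8)$ for every $t > 0$; minimizing the exponent by taking $t = 4\eps$ gives $\Pr[\bar X - \mu \geq \eps] \leq \exp(-2k\eps^2)$. Running the identical argument on $-Y_i$ (equivalently, replacing each $X_i$ by $1 - X_i$) bounds $\Pr[\bar X - \mu \leq -\eps]$ by the same quantity, and a union bound over the two tails introduces the factor of $2$, finishing the proof. The main obstacle is Hoeffding's lemma, and within it the clean estimate $\psi''(u) \leq 1/4$; the remaining steps are routine Chernoff-method bookkeeping together with a one-variable optimization.
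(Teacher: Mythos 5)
The paper does not prove this statement---it is quoted directly from Hoeffding's 1963 paper as a known result---so there is no internal proof to compare against. Your argument is the standard and correct one: the Cram\'er--Chernoff exponential-moment method combined with Hoeffding's lemma (whose proof via convexity and the second-derivative bound $\psi''(u) = q(1-q) \leq 1/4$ you sketch accurately), with the optimization $t = 4\eps$ yielding the exponent $-2k\eps^2$ and a union bound over the two tails giving the factor of $2$; the specialization to the length-one interval $0 \leq X_i \leq 1$ is handled correctly.
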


We can derive concentration results quite directly from this bound.
Note that the bound is for a fixed $i \in [2,6]$, i.e., a fixed motif that is
not the $3$-star.

\begin{theorem} \label{thm:basic-conc} Fix $\delta, \eps \in (0,1)$ and $i \in [2,6]$.
Set $\samp = \ceil{(2\eps)^{-2}\ln(2/\delta)}$. For all $i \in [2,6]$:
with probability at least $1-\delta$, $|\aice_i - \ind_i| < \eps W/A_{2,i}$.
With probability at least $1-\delta$, $|\aice_1 - \ind_1| < \eps W$.
\end{theorem}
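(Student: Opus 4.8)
The plan is to obtain each of the two bounds as a single clean application of the Hoeffding inequality (\Thm{hoeff}) to an appropriately chosen empirical average of i.i.d.\ $[0,1]$-valued random variables, reusing the expectation computations already carried out in the proof of \Thm{exp}.

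First I would handle a fixed $i\in[2,6]$. As in \Thm{exp}, for $\ell\in[\samp]$ let $X_\ell$ be the indicator that $S_\ell$ induces the $i$th motif; since the $\samp$ calls to \sample{} use independent internal randomness, $X_1,\dots,X_\samp$ are i.i.d., each lies in $\{0,1\}\subseteq[0,1]$, and the empirical mean $\bar X=count_i/\samp$ has $\mu:=\EX[\bar X]=\ind_i A_{2,i}/W$ by \Clm{sample}. Then \Thm{hoeff} gives $\Pr[\,|\bar X-\mu|\ge\eps\,]\le 2\exp(-2\samp\eps^2)$, and plugging in $\samp=\ceil{(2\eps)^{-2}\ln(2/\delta)}$ makes this upper bound at most $\delta$. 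On the complementary event we have $|\bar X-\mu|<\eps$; since $\aice_i=\bar X\cdot(W/A_{2,i})$ and $\ind_i=\mu\cdot(W/A_{2,i})$, multiplying through by $W/A_{2,i}$ yields $|\aice_i-\ind_i|<\eps W/A_{2,i}$, which is the first claim.

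The second claim, about $\aice_1$, is where a little care is needed: $\aice_1$ is built from the three estimators $\aice_3,\aice_5,\aice_6$ with coefficients $1,2,4$, and the naive route of controlling each of them separately would cost a union-bound factor of $3$ in the failure probability and force one to track the coefficients through the additive error. Instead I would fold the whole linear combination into one $[0,1]$-valued average. Expanding $\aice_1=\nonind_1-\aice_3-2\aice_5-4\aice_6$ with $\aice_j=(count_j/\samp)(W/A_{2,j})$ and using $A_{2,3}=2$, $A_{2,5}=6$, $A_{2,6}=12$ gives $\aice_1=\nonind_1-W\bar Z$, where $\bar Z:=\frac1\samp\sum_{\ell=1}^\samp Z_\ell$ and $Z_\ell:=\frac12\mathbf 1[S_\ell\text{ induces motif }3]+\frac13\mathbf 1[S_\ell\text{ induces motif }5]+\frac13\mathbf 1[S_\ell\text{ induces motif }6]$. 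The three events here are mutually exclusive, so $Z_\ell\in\{0,\frac13,\frac12\}\subseteq[0,1]$ and the $Z_\ell$ are i.i.d.; by \Clm{sample}, $\EX[\bar Z]=(\ind_3+2\ind_5+4\ind_6)/W$, which equals $(\nonind_1-\ind_1)/W$ by the first row of \Eqn{matrix}. Hence $\aice_1-\ind_1=W(\EX[\bar Z]-\bar Z)$, so $|\aice_1-\ind_1|=W\,|\bar Z-\EX[\bar Z]|$; applying \Thm{hoeff} to $\bar Z$ with the same $\samp$ gives $|\bar Z-\EX[\bar Z]|<\eps$ with probability at least $1-\delta$, and multiplying by $W$ completes the proof.

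The only step I expect to be non-routine is the last paragraph — recognizing that $\aice_1$ is itself $\nonind_1$ minus $W$ times a bona fide empirical average of $[0,1]$-valued i.i.d.\ variables, which works precisely because the motif-induced events for a single sample are disjoint and hence the combined per-sample weight never exceeds $1$, so a single invocation of Hoeffding suffices with no union bound and no loss in the constants. Everything else is the expectation bookkeeping from \Thm{exp} followed by a direct substitution into \Thm{hoeff}.
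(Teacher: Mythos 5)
Your argument is correct and matches the paper's proof essentially step for step: the first part is the same indicator-variable application of Hoeffding followed by rescaling by $W/A_{2,i}$, and your $Z_\ell$ is exactly the paper's auxiliary variable $Y_\ell$ (taking values $1/2$, $1/3$, $1/3$ on the three motif events), used in the same way to express $\aice_1 - \ind_1$ as $W$ times a single empirical-mean deviation so that one Hoeffding application suffices without a union bound.
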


\begin{proof} We have fixed $i \neq 1$. 
For each $\ell \in [k]$, define $X_\ell$ to be the indicator random variable for $S_\ell$ inducing
the $i$th motif. Observe that \emph{when $i$ is fixed}, each $X_\ell$ is independent,
since it is simply the result on an independent sample. In other words, the chance of the $\ell$th sample
inducing the $i$th motif is independent over the sample index $\ell$. 

Apply \Thm{hoeff} to $X_1, \ldots, X_\samp$. With probability at most $\delta$,
$|\bar{X} - \EX[\bar{X}]| \geq \eps$ (we use the notation from \Thm{hoeff}).
It remains to interpret $\bar{X}$. Note that $\aice_i = (count_i/\samp)\cdot(W/A_{2,i})$.
Since $count_i = \sum_{\ell=1}^k X_\ell$, $\aice_i = \bar{X}\cdot(W/A_{2,i})$.
So $|\bar{X} - \EX[\bar{X}]| \geq \eps$ implies $|\aice_i - \ind_i| \geq \eps$,
as desired.

Since $\aice_1$ is obtained by subtracting out other terms, it appears that the
errors could add up. With a little care, we can get the same bound as the other $\aice_i$'s.
Define random variable $Y_\ell$ as follows: if $S_\ell$ induces a tailed triangle,
$Y_\ell = 1/A_{2,3} = 1/2$. If $S_\ell$ induces a chordal-4-cyle, $Y_\ell = 2/A_{2,5} = 2/6$.
If $S_\ell$ induces a 4-clique, $Y_\ell = 4/A_{2,6} = 4/12$. In all other cases, $Y_\ell = 0$.
We have constructed this random variable, so that $\EX[Y_\ell] = (\ind_3 + 2\ind_5 + 4\ind_6)/W$.

Observe that $\aice_1$ can also be expressed as $N_1 - (\sum_\ell Y_\ell/\samp)W$. The additive
error $|\aice_1 - \ind_1|$ is the same as $W\cdot|\bar{Y} - \EX[\bar{Y}]|$. Applying
\Thm{hoeff}, with probability at least $1-\delta$, $|\aice_1 - \ind_1| < \eps W$.
\end{proof}

To get error bounds for obtaining all estimates, we take the union bound. 
Hence, with probability at least $1-6\delta$, we get the same accuracy guarantee of \Thm{basic-conc} simultaneously for all counts.

\begin{theorem} \label{thm:perf-samp} The running time and total storage (including all preprocessing) of \estimate{} is $O(m+\samp)$.
\end{theorem}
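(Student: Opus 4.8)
The plan is to bound the two contributions to the cost separately: a one-time preprocessing phase, which I claim runs in $O(m)$ time and space, and the sampling loop, which makes $\samp$ independent calls to \sample{} plus some $O(1)$-size bookkeeping, for which I claim each iteration costs $O(1)$. I would work in the standard word-RAM model, so that arithmetic on the quantities appearing in the algorithm — the degrees, the $\tau_e$'s, and $W$, all of which fit in $O(1)$ machine words for any realistic $n$ — costs $O(1)$. I may also assume without loss of generality that $G$ has no isolated vertices, since they are never touched by \sample{} and can be dropped on input; then $n \le 2m$, so every $O(m+n)$ bound is an $O(m)$ bound.

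First I would handle preprocessing. Reading the edge list in a single pass lets me (i) accumulate all degrees $d_v$; (ii) build adjacency arrays, recording, for each incidence $(u,v)$, the position of $v$ inside $u$'s neighbor array; (iii) compute $\tau_e = (d_u-1)(d_v-1)$ for every edge together with $W = \sum_e \tau_e$, hence the probabilities $p_e$; and (iv) compute $\nonind_1 = \sum_{v\in V} {d_v \choose 3}$. This is $O(m)$ time and space. On top of it I would build two auxiliary structures, each in $O(m)$ time and space: an \emph{alias table} (Vose/Walker) for the distribution $\{p_e\}$ over the $m$ edges, which draws an edge according to $p_e$ in $O(1)$ per draw; and a hash set of all $m$ edges, answering ``is $(a,b)\in E$?'' in $O(1)$ expected time.

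Next I would bound the per-sample cost. One call to \sample{} draws the middle edge $e=(u,v)$ from the alias table in $O(1)$; then draws a uniform neighbor of $u$ other than $v$ in $O(1)$ by picking a uniform index $j\in\{1,\dots,d_u-1\}$ and returning the $j$-th neighbor of $u$ if $j$ is below the recorded position of $v$ in $u$'s list and the $(j{+}1)$-th neighbor otherwise — exactly uniform over the $d_u-1$ neighbors of $u$ distinct from $v$ — and likewise draws $v'$ as a uniform neighbor of $v$ other than $u$; so \sample{} runs in $O(1)$. Determining the motif induced by $S_\ell$ is also $O(1)$: there are at most four distinct vertices $u',u,v,v'$, at most three pairs whose adjacency is not already known ($\{u',v\},\{u',v'\},\{u,v'\}$), each decided by one hash-set lookup, plus the test $u'=v'$; the outcome selects which of the six $count_i$ to increment. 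The loop therefore costs $O(\samp)$ in expectation, and converting the six counters into $\aice_1,\dots,\aice_6$ via the formulas of \Alg{basic} is $O(1)$. Storing the $\samp$ sampled vertex sets is what accounts for the $O(\samp)$ term in space (one could also process each sample online and keep only the six counters, for $O(m)$ space); with the $O(m)$ preprocessing structures this gives total storage $O(m+\samp)$.

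The only steps needing more than routine accounting are the two $O(1)$-per-sample claims, so that is where I expect the real work. For weighted edge selection, a naive CDF plus binary search would give only $O(\log m)$ per sample and an $O(m+\samp\log m)$ bound; the clean $O(m+\samp)$ needs the $O(m)$-time alias-table construction. For adjacency testing, hashing gives $O(1)$ \emph{expected} time per query, which already suffices for an expected-time statement; if a worst-case bound is desired, one would replace the hash set by an FKS-style two-level perfect hash table on the $m$ edges, built in $O(m)$ expected time during preprocessing, after which every query is worst-case $O(1)$. Everything else — the single input pass, the $O(1)$-word arithmetic, and the six-counter bookkeeping — is immediate.
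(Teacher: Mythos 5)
Your proposal is correct and follows the same decomposition as the paper's proof: $O(m)$ preprocessing plus $O(1)$ expected work per sample, with hashing for constant-time adjacency queries (so the bound is expected time, as the paper also concedes). The difference is one of rigor rather than strategy, but in one place that rigor matters: the paper simply asserts that ``generating a single $3$-path sample takes $O(1)$ time'' without saying how to draw an edge $e$ with probability $p_e = \tau_e/W$ in constant time after only linear preprocessing. A naive cumulative-distribution table with binary search gives $O(\log m)$ per draw and hence only $O(m + \samp\log m)$, so the stated bound genuinely needs something like the Walker--Vose alias table you invoke (or an equivalent $O(1)$-per-draw discrete sampler). Your index-shifting trick for drawing a uniform neighbor of $u$ other than $v$, and the observation that only three adjacency queries are needed to classify the induced motif, likewise make explicit what the paper leaves implicit. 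In short, your write-up proves the theorem as stated, whereas the paper's proof is a correct sketch that silently assumes the constant-time weighted-sampling primitive you supply.
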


\begin{proof} We will assume that the adjacency lists are stored in standard hash tables, to allow for constant lookup time.
(Hence, our running time is actually expected over the hash table. If we store the adjacency list in a tree-based
dictionary data structure, we incur an additional factor of $\log (\max_v d_v)$ in the running time.)
We can access the degree $d_v$ and a random neighbor of $v$ in constant time.

The preprocessing required to determine each value of $\tau_e$ is linear in $m$. Generating a single
$3$-path sample takes $O(1)$ time. Checking the motif induced by the path also takes $O(1)$ time.
The only additional storage are the values $\tau_e$ and the various counts. Hence, both the running
time and storage can be bounded by $O(m+\samp)$.
\end{proof}

\subsection{The challenge of cycle-based motifs} \label{sec:cycle}

\Thm{exp} and \Thm{basic-conc} seem to give us all we want, so why aren't we done? The catch is that the concentration
bound of \Thm{basic-conc} is actually too weak to give reasonable estimates for real world graphs. 
Let us do some rough calculations, ignoring the constants. To get an estimate such that $|\aice_i - \ind_i| < \eps W$,
we require $\samp \approx 1/\eps^2$. But for such an estimate to be useful, we need to understand how $W$ relates
to $\ind_i$. So $\eps$ needs to be of the order of $\ind_i/W$, and consequently, $\samp$ needs to be $(W/\ind_i)^2$.

\begin{table}[h]
\centering
\caption{$W$ vs $\ind_i$: counts given as orders of magnitude.}\label{tab:comp}
\begin{tabular}{l||c|c|c|c|c}
\hline
Graph &  $W$ &  $C_2$  &  $C_3$ & $C_4$ & $C_6$ \\ \hline
amazon0312	& E+09	& E+08 & E+08	& E+06 & E+06 \\
as-skitter	 & E+12 & 	E+11  & E+11	& E+10 & E+08\\
orkut	 & E+13	& E+13& E+12 & E+10 & E+09 \\ \hline
\end{tabular}
\end{table}

Refer to \Tab{comp} for the values of $W$ and a few $C_i$s. (For convenience, we just give
the order of magnitude of each number. Full numbers are given later.)
For $i \in \{1,2,3\}$ (3-star, 3-path, and tailed triangle),
$(W/\ind_i)^2$ is usually $< 10^4$. This is fairly reasonable number of samples to take, and 
leads to an efficient and accurate algorithm. On the other hand, for $i \in \{4,5,6\}$ (4-cycle, chordal-4-cycle, and 4-clique),
$(W/\ind_i)^2$ is often $> 10^8$, which is too many samples to take.

In other words, \estimate{} does not perform well for motifs containing a 4-cycle. This leads us to a new algorithm
for dealing with these motifs, as described in the next section.

\section{Improved estimation of 4-cycle-based motifs  via {\em centered} 3-paths}

We denote the 4-cycle, chordal-4-cycle, and 4-clique as {\em cycle-based} motifs. We design a better algorithm to estimate
them. While the algorithm is provably correct for any graph, the fact that it gives a significant improvement is dependent
on the structure of real-world graphs. 

Our aim is to find a subset $\cS$ of 3-paths with the following properties:
\begin{compactitem}
	\item Every cycle-based motif is guaranteed to contain a fixed number of 3-paths from $\cS$.
	\item It is possible to quickly generate uniform random samples from $\cS$.
	\item $|\cS|$ is significantly smaller than $W = \sum_{e = (u,v)} (d_u-1)(d_v-1)$.
\end{compactitem}

Let us go back to \sample, and think of enumerating all 3-paths. For edge $(u,v)$,
we take every neighbor of $u$ and every neighbor of $v$ to generate a 3-path.
We basically take the Cartesian product of the adjacency lists of $u$ and $v$.
Could we prune the adjacency lists so this product is smaller? 

Suppose we order all vertices based on degree and vertex id.
So we say $u \prec v$ if: $d_u < d_v$ or, if $d_u = d_v$, the vertex id of $u$
is less than that of $v$.
We could prune the lists using this ordering. When looking for 3-paths where
$(u,v)$ is the middle edge, we only look at the portion of $u$'s list ``greater"
than $v$, and the portion of $v$'s list greater than $u$. In general, many 3-paths
are generated when $d_u$ and $d_v$ are large. But in that case, we hope that many neighbors
of $u$ and $v$ are of lower degree. The pruning ignores such vertices and (hopefully)
reduces the set of 3-paths considered. Let us define the set $\cS$ of centered 3-paths.
\begin{definition}[Centered 3-path]\label{def:canonical-3-path} 
A 3-path formed by edges $\{(t,u),(u,v),(v,w)\}$ is \emph{centered} 
if: $v \prec t$, $u \prec w$, and the edge $(t,w)$ exists in the graph
(so $t,u,v,w$ form a 4-cycle).
\end{definition}
We prove the important property that every cycle-based motif contains
a fixed number of centered 3-paths.

\begin{lemma}
\label{lem:unique-3path}
Every induced 4-cycle and chordal-4-cycle contains exactly one centered 3-path. Every induced 4-clique contains
exactly three centered 3-paths.
\end{lemma}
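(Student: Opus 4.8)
The plan is to reduce the lemma to one clean fact about a single 4-cycle, and then just count 4-cycles. Recall that, by \Def{canonical-3-path}, a centered 3-path $\{(t,u),(u,v),(v,w)\}$ carries with it the extra edge $(t,w)$, so its four edges $\{(t,u),(u,v),(v,w),(w,t)\}$ form a 4-cycle on the four distinct vertices $t,u,v,w$. Conversely, a 4-cycle subgraph ``contains'' a 3-path only by deleting one of its four edges. Hence the centered 3-paths lying inside a motif correspond exactly to those 4-cycle subgraphs (together with a distinguished ``centered'' edge deletion), and it suffices to show: \emph{every 4-cycle subgraph contains exactly one centered 3-path}. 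Then the number of centered 3-paths in a motif equals the number of its 4-cycle subgraphs.

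For the key fact, fix a 4-cycle. Its four vertices split into two pairs of mutually non-adjacent (``opposite'') vertices, and any two vertices from different pairs are adjacent. If we delete the cycle edge $(t,w)$, the surviving 3-path has middle edge $(u,v)$, where $u$ is opposite $w$ and $v$ is opposite $t$; so the centering requirements $v \prec t$ and $u \prec w$ say exactly that each middle vertex is $\prec$-smaller than the vertex opposite it. Since $\prec$ is a total order, each of the two opposite-pairs has a unique minimum; the two minima lie in different pairs and are therefore adjacent, so they span a cycle edge, and taking that edge as the middle edge yields a centered 3-path. It is the only one: for any other choice of middle edge, one of its endpoints is the maximum of its pair, violating a centering inequality. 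So there is exactly one centered 3-path per 4-cycle.

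It remains to count 4-cycle subgraphs inside each cycle-based motif. An induced 4-cycle is chordless, so the only 4-cycle on its vertex set is itself: one centered 3-path. The chordal-4-cycle is $K_4$ minus an edge; the two endpoints of that missing edge have degree $2$, so any 4-cycle must use all four edges incident to them, which already determines the unique 4-cycle (the one avoiding the chord): again one centered 3-path. The 4-clique is $K_4$, which has $(4-1)!/2 = 3$ distinct 4-cycle (Hamiltonian) subgraphs, and distinct 4-cycles give distinct centered 3-paths (a centered 3-path together with its closing edge reconstructs its 4-cycle), so the 4-clique contains exactly three centered 3-paths. This is precisely the statement of the lemma.

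I expect the only real obstacle to be the case analysis inside a single 4-cycle in the second step: one must keep the opposite-pair bookkeeping straight, correctly match the endpoints $t,w$ of the deleted edge to the middle vertices $u,v$ so that the inequalities in \Def{canonical-3-path} translate into ``each middle vertex precedes its opposite,'' and verify that the two pair-minima really are adjacent (so the asserted unique centered 3-path actually exists). Once that is nailed down, the counting of 4-cycles in the three motifs is routine.
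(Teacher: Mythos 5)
Your proof is correct and takes essentially the same route as the paper's: first show that a single 4-cycle contains exactly one centered 3-path (your middle edge joining the two opposite-pair minima is exactly the paper's choice of the smallest vertex together with its smaller neighbor), then reduce each motif to counting its 4-cycle subgraphs (one for the induced 4-cycle and the chordal-4-cycle, three for the 4-clique). Your ``each middle vertex precedes its opposite'' characterization is just a slightly more systematic packaging of the paper's uniqueness argument.
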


\begin{proof} Consider any (vanilla) 4-cycle, formed by vertices (in order) $t,u,v,w$. 
Pick the smallest vertex, say $u$. Pick the neighbor of $u$ that is smaller, say $v$.
We show that the 3-path $\{(t,u),(u,v),(v,w)\}$ is the only centered 3-path in this 4-cycle.

By the choice of $(u,v)$, $v \prec t$ and $u \prec w$. Hence, we see that $\{(t,u),(u,v),(v,w)\}$
is centered. The only other possible centered 3-path is $\{(u,t),(t,w),(w,v)\}$.
Because $v \prec t$, this path cannot be centered. That completes
the proof for the induced 4-cycle case.

Now, suppose $t,u,v,w$ forms an induced chordal-4-cycle. The extra 3-paths contain
the chord in the middle, and such 3-paths do not lie on a 4-cycle. So there
only exists one centered 3-path.

A 4-clique contains three 4-cycles that partition the 12 different 3-paths. Each of these 
4-cycles has a centered 3-path, yielding a total of three such 3-paths.
\end{proof}

We now show how to sample a uniform random centered 3-path. It is quite analogous 
to \sample. First, some notation. Let $L_{u,v}$ be the number of neighbors of $u$ 
greater than $v$.
By sorting all adjacency lists according to vertex degree and id, 
we can compute for every edge $e = (u,v)$, the value $\lambda_e = L_{u,v}L_{v,u}$.
Let $\Lambda = \sum_e \lambda_e$.

\begin{algorithm}
 \caption{\samplecent}\label{alg:samplecent}
 \DontPrintSemicolon
 Compute $\lambda_e$ for all edges and set $p_e = \lambda_e/\Lambda$.\;
 Pick edge $e = (u,v)$ with probability $p_e$.\label{step2cent}\;
 Pick uniform random neighbor $u'$ of $u$ such that $v \prec u'$.\;
 Pick uniform random neighbor $v'$ of $v$ such that $u \prec v'$.\;
 Output the three edges $\{(u',u),(u,v),(v,v')\}$.
\end{algorithm}

Note that it is possible that \samplecent{} outputs a 3-path
that is not centered (if the 3-path does not lie on a 4-cycle).
Nonetheless, analogous to \Clm{sample}, we have the following.

\begin{claim} \label{clm:samplecent} Fix any centered 3-path. The probability
that \samplecent{} outputs this 3-path is exactly $1/\Lambda$.
\end{claim}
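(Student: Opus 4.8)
The plan is to mirror the proof of \Clm{sample} essentially verbatim, accounting for the pruned adjacency lists. Fix a centered 3-path formed by edges $\{(u',u),(u,v),(v,v')\}$, where by \Def{canonical-3-path} we have $v \prec u'$ and $u \prec v'$ (and the closing edge $(u',v')$ exists, though that fact plays no role here). I would track the three independent random choices made by \samplecent{} and multiply the corresponding probabilities.

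First, the middle edge $(u,v)$ is selected in Step~\ref{step2cent} with probability $p_e = \lambda_e/\Lambda = L_{u,v}L_{v,u}/\Lambda$. Conditioned on this, the algorithm picks a uniform random neighbor $u'$ of $u$ with $v \prec u'$; there are exactly $L_{u,v}$ such neighbors, so the specific vertex $u'$ is chosen with probability $1/L_{u,v}$. Since our fixed path is centered, $v \prec u'$ indeed holds, so $u'$ is one of the eligible choices and this probability is well-defined and nonzero. Symmetrically, $v'$ is picked among the $L_{v,u}$ neighbors of $v$ satisfying $u \prec v'$, with probability $1/L_{v,u}$, and again $u \prec v'$ holds by centeredness. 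Multiplying, the path is output with probability $[L_{u,v}L_{v,u}/\Lambda]\cdot[1/L_{u,v}]\cdot[1/L_{v,u}] = 1/\Lambda$.

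The one subtlety — the main thing to get right rather than a real obstacle — is confirming that a centered 3-path has a \emph{unique} representation as an ordered middle edge plus two endpoint choices, so that no double-counting inflates the probability. This follows because the middle edge of the 3-path is $(u,v)$, fixed; the endpoint adjacent to $u$ is $u'$, fixed; and the endpoint adjacent to $v$ is $v'$, fixed. There is no symmetric alternative (unlike a fully unordered accounting, the algorithm's choices are tied to the labeled endpoints $u$ and $v$ of the sampled edge, so the path $\{(u',u),(u,v),(v,v')\}$ corresponds to exactly one outcome of the three random draws). The orientation constraints $v\prec u'$, $u\prec v'$ are exactly what guarantee that both draws land in the pruned lists from which the algorithm actually samples, so the event has positive probability and the computation above is valid. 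The probability is the same $1/\Lambda$ for every centered 3-path, which proves the claim.
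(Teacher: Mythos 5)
Your proof is correct and is exactly the argument the paper intends: the paper does not write out a proof of \Clm{samplecent}, stating only that it is ``analogous to \Clm{sample},'' and your write-up is precisely that analogy with the counts $(d_u-1),(d_v-1)$ replaced by $L_{u,v},L_{v,u}$ and $W$ by $\Lambda$. Your added observations---that centeredness guarantees both endpoint draws lie in the pruned lists, and that a fixed centered 3-path corresponds to a unique outcome of the three draws---are the right points to check and are handled correctly.
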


Now, we give the algorithm that estimates the number of cycle-based motifs. It is
analogous to \estimate, only using centered 3-paths. For convenience,
let $B_i$ denote the number of centered 3-paths in the $i$th motif.
So $B_4 = B_5 = 1$ and $B_6 = 3$, by \Lem{unique-3path}.

\begin{algorithm}
 \caption{\estcent \newline Input: graph $G = (V, E)$, samples $\samp$}\label{alg:centered}
 \DontPrintSemicolon
 Run \samplecent{} $k$ times to get $k$ set of edges. Let $T_\ell$ denote
 the set of corresponding edges for the $\ell$th set.\;
 Initialize $count_i = 0$ for $i \in [4,6]$.\;
 For $\ell \in [1,k]$,\;
 \ \ \ \ If $T_\ell$ is a centered 3-path,\;
 \ \ \ \ \ \ \ Determine subgraph induced by $S_\ell$.\;
 \ \ \ \ \ \ \ If this is the
 $i$th motif, increment $count_i$.\;
 For each $i \in [4,6]$,\;
 \ \ \ \ Set $\widehat{\ind}_i = (count_i/\samp)\cdot(\Lambda/B_i)$.\;
\end{algorithm}

Analogous to \Thm{basic-conc}, we can prove the following.
Observe how $W$ is replaced by $\Lambda$.

\begin{theorem} \label{thm:cent-conc} Fix $\delta, \eps \in (0,1)$
and set $\samp = \ceil{(2\eps)^{-2}\ln(2/\delta)}$. For all $i \in [4,6]$:
with probability at least $1-\delta$, $|\aice_i - \ind_i| < \eps \Lambda/B_i$.
\end{theorem}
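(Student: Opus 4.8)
The plan is to mirror the proof of \Thm{basic-conc} essentially verbatim, making the substitutions \sample{}~$\to$~\samplecent{}, $W \to \Lambda$, ``number of 3-paths in motif $i$'' $A_{2,i} \to$ ``number of centered 3-paths in motif $i$'' $B_i$, and \Clm{sample}~$\to$~\Clm{samplecent}. Fix $i \in [4,6]$. For each $\ell \in [\samp]$ let $X_\ell$ be the indicator random variable that the $\ell$th call to \samplecent{} outputs a centered 3-path that lies in a copy of the $i$th motif; in particular $X_\ell = 0$ whenever the sampled 3-path is not centered (i.e.\ does not lie on a 4-cycle). Since each call to \samplecent{} uses fresh independent randomness, for this fixed $i$ the variables $X_1,\dots,X_\samp$ are mutually independent — exactly the observation used in the proof of \Thm{basic-conc}. (One cannot instead fix $\ell$ and range over $i$, as the three cycle-based motif indicators for a single sample are dependent; this is why the per-motif bound is proved first and a union bound is taken afterwards.)

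Next I would compute $\EX[X_\ell]$. By \Lem{unique-3path}, every copy of a cycle-based motif $i$ contains exactly $B_i$ centered 3-paths. Moreover the four vertices of any centered 3-path $\{(t,u),(u,v),(v,w)\}$ span a 4-cycle (the closing edge $(t,w)$ is present by \Def{canonical-3-path}), so their induced subgraph is a 4-cycle, chordal-4-cycle, or 4-clique, and distinct motif copies — being distinct vertex sets — contribute disjoint collections of centered 3-paths. Hence the total number of centered 3-paths lying in a copy of the $i$th motif is exactly $B_i \ind_i$, and by \Clm{samplecent} each of them is output with probability exactly $1/\Lambda$, giving $\EX[X_\ell] = B_i \ind_i/\Lambda$. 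Writing $\bar X = \frac{1}{\samp}\sum_{\ell} X_\ell$, the algorithm sets $\aice_i = (count_i/\samp)\cdot(\Lambda/B_i) = \bar X\cdot(\Lambda/B_i)$, so that $|\aice_i - \ind_i| = (\Lambda/B_i)\cdot|\bar X - \EX[\bar X]|$.

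Finally I would apply \Thm{hoeff} to $X_1,\dots,X_\samp$: with the stated choice $\samp = \ceil{(2\eps)^{-2}\ln(2/\delta)}$ the bound $2\exp(-2\samp\eps^2)$ is at most $\delta$, exactly as in \Thm{basic-conc}, so with probability at least $1-\delta$ we have $|\bar X - \EX[\bar X]| < \eps$; multiplying through by $\Lambda/B_i$ yields $|\aice_i - \ind_i| < \eps\Lambda/B_i$. There is no genuinely hard step here — the only place that needs care is the bookkeeping that forces $\EX[X_\ell] = B_i\ind_i/\Lambda$, namely (a) that \samplecent{} can emit non-centered 3-paths, which must be assigned $X_\ell = 0$ and do no harm since \Clm{samplecent} only promises the $1/\Lambda$ probability for \emph{centered} 3-paths, and (b) that no centered 3-path is charged to two different motif copies, which is precisely \Lem{unique-3path} together with the fact that a centered 3-path forces a 4-cycle on its four vertices.
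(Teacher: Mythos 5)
Your proposal is correct and is precisely the argument the paper intends: the paper omits the proof of \Thm{cent-conc} entirely, saying only that it is ``analogous to \Thm{basic-conc},'' and you have carried out that analogy faithfully — fixing $i$, using independence across samples, invoking \Lem{unique-3path} and \Clm{samplecent} to get $\EX[X_\ell]=B_i\ind_i/\Lambda$, and applying Hoeffding before rescaling by $\Lambda/B_i$. Your two bookkeeping points (zeroing out non-centered outputs, and disjointness of the centered 3-paths across distinct induced copies) are exactly the details the paper's ``analogous'' hand-wave glosses over, so nothing is missing.
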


For the same number of samples, the performance of \estcent{} requires an additional logarithmic factor because of additional preprocessing.
In general, $d_v$ is much smaller than $n$ (and is effectively constant for most vertices), so the additional logarithmic factor
is not too expense. We require fewer samples for the same accuracy, so \estcent{} wins at scale.

\begin{theorem} \label{thm:perf-cent} The running time of \estcent{} is $O(\sum_v d_v\log d_v+\samp)$ and the total storage
is $O(m+\samp)$.
\end{theorem}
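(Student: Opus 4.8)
The plan is to follow the proof of \Thm{perf-samp} almost verbatim, the one new ingredient being the one‑time cost of sorting the adjacency lists, which is precisely where the extra logarithmic factor comes from. As there, I would assume the adjacency lists are stored in hash tables, so that degree queries, edge‑membership queries, and drawing a uniform random neighbor all take $O(1)$ (expected) time.

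\textbf{Preprocessing.} First I would sort the adjacency list of each vertex $v$ according to the order $\prec$ (by degree, ties broken by vertex id); this costs $O(d_v\log d_v)$ per vertex, hence $O(\sum_v d_v\log d_v)$ in total. Next, for each edge $e=(u,v)$ I need $\lambda_e = L_{u,v}L_{v,u}$; since $u$'s list is now sorted, $L_{u,v}$ (the number of neighbors of $u$ that are $\succ v$) is read off by a single binary search for $v$ in $u$'s list in $O(\log d_u)$ time, and symmetrically $L_{v,u}$ in $O(\log d_v)$. Summing over all edges, $\sum_{\{u,v\}\in E}(\log d_u+\log d_v) = \sum_v d_v\log d_v$, so all $\lambda_e$, and $\Lambda=\sum_e\lambda_e$, are computed within budget. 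I would also record, for each edge, the index in $u$'s sorted list just past $v$ (and symmetrically), so that the ``valid'' neighbor set of each endpoint is a known contiguous suffix. Finally, to sample an edge with probability $p_e=\lambda_e/\Lambda$ in constant time, I would build a Walker alias table over the $m$ edges in $O(m)$ time.

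\textbf{Sampling phase.} Each call to \samplecent{} picks an edge $(u,v)$ in $O(1)$ via the alias table; because the lists are sorted and the endpoints of the valid suffixes are cached, a uniform neighbor $u'$ with $v\prec u'$ (and likewise $v'$) is drawn in $O(1)$. So each of the $\samp$ samples costs $O(1)$. In the body of \estcent, testing whether $T_\ell$ is a centered 3-path is one membership query for the edge $(t,w)$ plus two comparisons in $\prec$, all $O(1)$; determining the motif induced by the at most four vertices is a constant number of membership queries. Hence the loop over $\ell\in[1,\samp]$ costs $O(\samp)$, and the total running time is $O(\sum_v d_v\log d_v+\samp)$. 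For storage: the sorted hash‑table adjacency lists take $O(m)$, the arrays of $\lambda_e$ values, suffix indices, and the alias table take $O(m)$, the counters $O(1)$, and storing the $\samp$ sampled edge sets $O(\samp)$ (one could even process each sample on the fly and avoid this), giving total storage $O(m+\samp)$.

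The only genuinely new point compared with \Thm{perf-samp}, and the step I would be most careful about, is the bookkeeping around the $L_{u,v}$ values: one must argue that sorted lists let each $L_{u,v}$ be obtained in $O(\log d_u)$ time, that $\sum_{\{u,v\}\in E}\log d_u$ telescopes into the $\sum_v d_v\log d_v$ term, and that caching the suffix boundaries keeps the per‑sample neighbor selection at $O(1)$ rather than $O(\log d_u)$. Everything else is a direct transcription of the earlier argument.
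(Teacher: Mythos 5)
Your proof is correct and follows the same route as the paper's (much terser) argument: sort each adjacency list, charge the sorting and the $\lambda_e$ computation to the $O(\sum_v d_v\log d_v)$ term, and then argue each sample and motif check costs $O(1)$ so the loop contributes $O(\samp)$. The extra bookkeeping you supply (binary search for $L_{u,v}$, cached suffix boundaries, the alias table for edge sampling) is exactly the detail the paper leaves implicit, and your telescoping of $\sum_{\{u,v\}\in E}(\log d_u + \log d_v)$ into $\sum_v d_v\log d_v$ is right.
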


\begin{proof} As discussed earlier, we need to sort the adjacency lists to determine each value of $\lambda_e$.
Once the lists are sorted, each centered $3$-path sample can generated in constant time.
That leads the running time bound. The remaining analysis is identical to that of \Thm{perf-samp} for \estimate.
\end{proof}

\subsection{Why centered 3-paths help} \label{sec:help}

We put the value of $W$ and $\Lambda$ for various real world networks in \Tab{centered}.
Observe how $\Lambda$ is at least an order of magnitude smaller than $W$. 
This is a huge difference when it comes to the sampling bounds
in \Thm{basic-conc} and \Thm{cent-conc}. These bounds show that
\emph{two orders of magnitude less samples} suffice for the same error (in estimating
cycle-based motifs). This improvement is extremely significant for getting
good accuracy with fewer samples.

\begin{table}[h]
\centering
\caption{Difference between the number of 3-paths and the number of centered 3-paths.}\label{tab:centered}
\begin{tabular}{l||c|c|c}
\hline
Graph &  $W$ &  $\Lambda$  &  $W/\Lambda$ \\ \hline
amazon0312	& 1.40E+09	& 9.36E+07	& 15 \\
amazon0505	& 1.59E+09 & 	1.02E+08	& 16 \\
amazon0601	& 1.57E+09 & 	1.01E+08	 & 15 \\
as-skitter	 & 1.43E+12 & 	9.05E+10	& 16\\
cit-Patents & 	9.16E+09	& 8.78E+08 & 	10 \\
web-BerkStan & 	1.69E+12	& 1.28E+11 & 	13\\
web-Google & 	2.05E+10	& 6.34E+08	& 32 \\
web-Stanford & 	1.85E+11	& 1.36E+10 & 	14 \\
wiki-Talk	& 1.31E+12 &  9.08E+09	& 144 \\
youtube & 	1.19E+11	& 1.68E+09 & 	71\\
flickr	& 1.31E+13 & 	8.42E+11	& 16 \\
livejournal	 & 1.67E+12	& 1.14E+11	& 15 \\
orkut	 & 2.22E+13	& 9.48E+11	& 23 \\ \hline
\end{tabular}
\end{table}

The final algorithm is simply obtained by running both \estimate{}
and \estcent. The former gives estimates for $C_1, C_2, C_3$ (we simply discard
the remaining estimates), and the latter estimates $C_4, C_5, C_6$.

\section{Getting practical error bars} \label{sec:error}

While the Hoeffding bound used above provides theoretical convergence, we do not
get practical error bars from it. In this section, we show how to get useful
error bars for our algorithm on real instances. 

All of our sampling algorithms have the same underlying primitive: try to estimate
the expectation $p$ of a Bernoulli random variable. We 
generate a binomial random variable $X \sim B(\samp,p)$ (by performing $\samp$ i.i.d.
Bernoulli trials), and hope that the outcome is close enough to the expectation. 

We employ a standard Bayesian viewpoint to generate
an error bar. Suppose, our outcome of the binomial draw is $X = r$. 
Conditioned on a choice of $p$, we calculate the probability that $X=r$.
This gives a prior
on $p$. Of course, this cannot
be done explicitly because of computational issues, but we can use tail bounds for $B(\samp,p)$
to get appropriate estimates. We use the following theorem of Chernoff~\cite{Che52}
(we use notation of Equation 1.4 from~\cite{DuPa09}) that gives good tail bounds for $B(\samp,p)$.

\begin{theorem} [Chernoff] Suppose $X \sim B(\samp,p)$. Fix $\alpha \in (0,1)$.
\begin{align*}
\Pr[X/\samp \geq \alpha] \leq \exp(-D(\alpha,p) \samp) \ \ \textrm{if $\alpha > p$}\\
\Pr[X/\samp \leq \alpha] \leq \exp(-D(\alpha,p) \samp) \ \ \textrm{if $\alpha < p$}
\end{align*}
where $D(a,b) = a\ln(a/b) + (1-a)\ln((1-a)/(1-b))$ (the KL-divergence
between Bernoulli distributions with expectation $a$ and $b$).
\end{theorem}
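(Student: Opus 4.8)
The plan is to prove this by the standard exponential-moment (Chernoff) method, since the quantity $D(\alpha,p)$ is exactly what emerges when one optimizes the moment generating function of a sum of i.i.d.\ Bernoulli variables. Write $X = \sum_{i=1}^{\samp} X_i$ with the $X_i$ i.i.d.\ Bernoulli$(p)$. I would treat the upper tail (the case $\alpha > p$) in detail; the lower tail then follows by applying the upper-tail bound to $\samp - X \sim B(\samp, 1-p)$ together with the symmetry $D(\alpha,p) = D(1-\alpha, 1-p)$ of the KL-divergence.

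For any $t > 0$, Markov's inequality applied to the nonnegative variable $e^{tX}$ gives $\Pr[X \geq \alpha \samp] = \Pr[e^{tX} \geq e^{t\alpha\samp}] \leq e^{-t\alpha\samp}\,\EX[e^{tX}]$. By independence, $\EX[e^{tX}] = \prod_{i}\EX[e^{tX_i}] = (1-p+pe^t)^{\samp}$, so the right-hand side equals $g(t)^{\samp}$ where $g(t) := e^{-t\alpha}(1-p+pe^t)$. It remains to pick $t$ minimizing $g$.

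Setting the derivative of $\ln g(t)$ to zero gives the stationary point $e^{t^\star} = \alpha(1-p)\big/\big(p(1-\alpha)\big)$; one checks $t^\star > 0$ precisely when $\alpha > p$, so the minimizer is admissible under the hypothesis. Substituting, $1-p+pe^{t^\star} = (1-p)/(1-\alpha)$, and after collecting the resulting powers one obtains $g(t^\star) = (p/\alpha)^{\alpha}\,\big((1-p)/(1-\alpha)\big)^{1-\alpha}$. Taking logarithms, $\ln g(t^\star) = -\big[\alpha\ln(\alpha/p) + (1-\alpha)\ln((1-\alpha)/(1-p))\big] = -D(\alpha,p)$, hence $\Pr[X \geq \alpha\samp] \leq g(t^\star)^{\samp} = \exp(-D(\alpha,p)\,\samp)$. (Alternatively one may simply cite~\cite{Che52}; we sketch the argument only for completeness.)

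The only step that needs genuine care is the optimization over $t$: verifying that the stationary point of $\ln g$ lies in $(0,\infty)$ exactly when $\alpha>p$ — and, for the lower tail, in $(-\infty,0)$ exactly when $\alpha<p$ — and checking that the minimized base collapses to the KL-divergence form. The remaining ingredients (Markov's inequality, factorization of the MGF over independent coordinates, and the algebraic substitution) are routine. There is no real obstacle here; the statement is classical, and the reason to include the derivation is to make the exponent $D(\alpha,p)$ explicit so that the Bayesian error-bar procedure in \Sec{error} can invert it numerically to produce concrete confidence intervals.
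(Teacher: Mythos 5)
Your derivation is correct. The paper itself gives no proof of this statement---it is quoted as a classical result with citations to Chernoff~\cite{Che52} and to Equation~1.4 of Dubhashi--Panconesi~\cite{DuPa09}---and the exponential-moment argument you sketch is exactly the standard proof underlying those references: Markov's inequality on $e^{tX}$, factorization of the moment generating function, optimization at $e^{t^\star}=\alpha(1-p)/(p(1-\alpha))$, and the algebraic collapse of the optimized base to $\exp(-D(\alpha,p))$, with the lower tail obtained from the upper tail via $\samp-X\sim B(\samp,1-p)$ and the symmetry $D(\alpha,p)=D(1-\alpha,1-p)$. All the computations check out; the only step you leave implicit is that the stationary point $t^\star$ is the global minimizer of $g$ over $t>0$, which follows from the convexity of $\ln g$ (the log-MGF is convex), and is indeed routine.
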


Suppose the outcome of $X/\samp = \alpha$. We can use the Chernoff bound to get
a range of likely values of $p$. Think of $\exp(-D(\alpha,p) \samp)$ as a function of $p$.
The basic properties of the KL-divergence (and simple algebra) imply that $\exp(-D(\alpha,p) \samp)$
is a unimodal function with a maximum value of $1$ at $p = \alpha$
and a minimum of $0$ at $p = 0,1$. That motivates the following
definition.

\begin{definition} \label{def:p} Fix $\samp,\alpha, x \in (0,1)$. Then $p_l(\samp,\alpha,x)$ (lower)
and $p_u(\samp,\alpha,x)$ (upper) are the two unique values of $p$ such that 
$\exp(-D(\alpha,p) \samp) = x$.
\end{definition}

With this definition, we can give precise error bars. In other words, given
the outcome of a binomial random variable $B(\samp,p)$, we can give an interval
of plausible values (up to any desired confidence $\delta$) for $p$.

\begin{corollary} [Error bar for binomial distribution] \label{cor:cher} Fix binomial
distribution $B(\samp,p)$, and $\alpha, \delta \in (0,1)$
Then, for any $p \notin [p_l(\samp,\alpha,\delta), p_u(\samp,\alpha,\delta)]$,
$$ \Pr_{X \sim B(\samp,p)}[X/\samp = \alpha] \leq \delta$$
\end{corollary}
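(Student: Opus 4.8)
The plan is to reduce the statement to the two tail inequalities of the preceding Chernoff theorem together with the monotonicity of the map $p \mapsto \exp(-D(\alpha,p)\samp)$ on each side of $p = \alpha$. First I would record the qualitative picture that \Def{p} already presupposes: for fixed $\alpha \in (0,1)$ the function $\exp(-D(\alpha,p)\samp)$ equals $1$ at $p = \alpha$, is strictly unimodal, and tends to $0$ as $p \to 0^+$ or $p \to 1^-$; consequently, for every $\delta \in (0,1)$ its two level-$\delta$ crossings satisfy $0 < p_l(\samp,\alpha,\delta) < \alpha < p_u(\samp,\alpha,\delta) < 1$. In particular the set of $p$ excluded by the corollary, namely $p < p_l$ or $p > p_u$, splits cleanly into the case $p > \alpha$ and the case $p < \alpha$, so the two directions of the Chernoff bound apply.

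Next, fix $p > p_u$, so in particular $p > \alpha$. The event $\{X/\samp = \alpha\}$ is contained in $\{X/\samp \le \alpha\}$, hence by the Chernoff tail bound for the regime $\alpha < p$ we get $\Pr_{X \sim B(\samp,p)}[X/\samp = \alpha] \le \exp(-D(\alpha,p)\samp)$. On the interval $(\alpha,1)$ the function $p \mapsto \exp(-D(\alpha,p)\samp)$ is strictly decreasing and, by \Def{p}, takes the value $\delta$ at $p = p_u$; therefore for every $p > p_u$ its value is strictly less than $\delta$, which proves the claim in this case. The case $p < p_l < \alpha$ is the mirror image: now $\{X/\samp = \alpha\} \subseteq \{X/\samp \ge \alpha\}$, the Chernoff tail bound for $\alpha > p$ gives $\Pr[X/\samp = \alpha] \le \exp(-D(\alpha,p)\samp)$, and strict monotonicity of this function on $(0,\alpha)$ together with its value $\delta$ at $p = p_l$ forces it below $\delta$. (When $\alpha\samp \notin \Z$ the probability is $0$ and there is nothing to prove, so this is harmless.)

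The only genuine work is verifying the shape of $p \mapsto D(\alpha,p)$ claimed in the first paragraph. I would do this by differentiating $D(\alpha,p) = \alpha\ln(\alpha/p) + (1-\alpha)\ln((1-\alpha)/(1-p))$ in $p$, obtaining $\partial_p D(\alpha,p) = (p-\alpha)/\big(p(1-p)\big)$, which is negative on $(0,\alpha)$ and positive on $(\alpha,1)$, so $D(\alpha,\cdot)$ is strictly decreasing then strictly increasing, with minimum $D(\alpha,\alpha)=0$; together with $D(\alpha,p)\to +\infty$ as $p\to 0^+$ or $p\to 1^-$ and the intermediate value theorem, this yields both the existence and uniqueness of $p_l,p_u$ and the strict monotonicity used above. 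I expect this elementary calculus check to be the main (and essentially only) obstacle, and it is routine; the rest of the corollary is just bookkeeping of which Chernoff inequality to invoke.
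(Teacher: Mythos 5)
Your proposal is correct, and it is exactly the argument the paper leaves implicit: the corollary is stated as an immediate consequence of the Chernoff theorem together with the unimodality of $p \mapsto \exp(-D(\alpha,p)\samp)$ that justifies Definition~\ref{def:p}, and you have simply written out that deduction (including the routine derivative computation $\partial_p D(\alpha,p) = (p-\alpha)/(p(1-p))$ verifying the unimodal shape). No gaps.
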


How does this relate to our algorithms? Observe that in both \Alg{basic} and \Alg{centered},
the variables $count_i$ are binomial random variables. So we can produce errors
bars for $count_i/\samp$ using the above corollary.
The final estimates are of
the form $\aice_i = (count_i/\samp)\cdot K_i$ ($i \neq 1$, and $K_i$ is some fixed scaling, depending
on the algorithm and $i$). So error bars for $count_i/\samp$ directly translate to error
bars for $\aice_i$. For $i=1$ (3-stars), we simply add up the errors
(in \estimate{}) for $\aice_3$, $2\aice_5$, and $4\aice_6$.

\section{Experimental Results}

\textbf{Preliminaries:} We implemented our algorithms in {\tt C} and ran our experiments on a
computer equipped with a 2x2.4GHz Intel Xeon processor with 6~cores
and  256KB  L2 cache (per core), 12MB L3 cache, and  64GB memory. 
We performed our experiments on 13 graphs  from SNAP~\cite{Snap}.
In all cases, directionality is ignored, and duplicate edges are omitted. 
\Tab{enumeration} has the properties of these graphs, where
$|V|$ and $|E|$  are the numbers of vertices and edges, respectively.  

Exact counts for the motifs are obtained by a well-tuned enumeration (counts and runtime given in \Tab{enumeration}).
This algorithm only enumerates the 4-cycle, the chordal-4-cycle, and the 4-clique, and uses direct approaches
to get other counts. For convenience, we refer to this the \emph{enumeration code}.
We do not get into details, but note that this code processes million edge 
Amazon networks in only 5 seconds\footnote{This is quite competitive with the best existing numbers in the literature of~\cite{MarcusS10}, whose algorithm takes 40 minutes on a 90K autonomous systems graph.}. It uses vertex orderings for speedup, analogous to using degeneracy orderings for triangle
enumeration~\cite{ChNi85,ScWa05}.
For getting 3-path sampling estimates, we run both \estimate{} and \estcent{} as described earlier, with $\samp = 200K$. 
We use the outputs of $\aice_1, \aice_2, \aice_3$ as given by \estimate{}, and $\aice_4, \aice_5, \aice_6$ from \estcent.
The runtimes are in the last column of \Tab{enumeration}.

\begin{table*}[t]
\centering
\caption{Exact values of pattern counts and runtimes (in seconds).}\label{tab:enumeration}
\hspace*{-2pt}
{\small
\begin{tabular}{|l|c|c|c|c|c|c|c|c|r|r|}
\hline
Datasets & $|V|$ & $|E|$& 3-star & 3-path & Tailed & 4-cycle & Chordal & 4-clique & {\scriptsize Enum.} & {\scriptsize 3-path} \\ 
& & & & & triangle &&4-cycle&& {\scriptsize time} & {\scriptsize  time} \\\hline
amazon0312 & 4.01{\scriptsize  E+}5 & 2.35{\scriptsize  E+}6 & 1.07E+10 & 8.44E+08 & 1.90E+08 & 3.23E+06 & 1.71E+07 & 3.98E+06 & 4.42 &0.47 \\ \hline
amazon0505 & 4.10{\scriptsize E+}5	& 2.44{\scriptsize  E+}6 & 1.21E+10 & 9.63E+08 & 2.19E+08 & 3.30E+06 & 1.91E+07 & 4.36E+06 & 4.75 & 0.48 \\ \hline
amazon0601 & 4.03{\scriptsize E+}5	& 2.44{\scriptsize  E+}6 &1.11E+10 & 9.41E+08 & 2.17E+08 & 3.22E+06 & 1.92E+07 & 4.42E+06 & 4.74 & 0.48 \\ \hline
as-skitter & 1.70{\scriptsize E+}6	& 1.11{\scriptsize  E+}7 & 9.64E+13 & 8.19E+11 & 1.62E+11 & 4.27E+10 & 1.96E+10 & 1.49E+08 & 5128.93 & 2.7 \\ \hline
cit-Patents & 3.77{\scriptsize E+}6	& 1.65{\scriptsize  E+}7 & 6.11E+9 & 6.54E+09 & 5.52E+08 & 2.69E+08 & 6.28E+07 & 3.50E+06 & 46.46 & 3.33\\ \hline
flickr & 1.86{\scriptsize E+}6	& 1.56{\scriptsize  E+}7 & 1.90E+13 & 6.89E+12 & 1.18E+11 & 1.18E+11 & 2.30E+11 & 2.67E+10 & 217274.39 & 2.53\\ \hline
livejournal & 5.28{\scriptsize E+}6 &	4.87{\scriptsize  E+}7&  4.46E+12 & 1.14E+12 & 1.26E+11 & 5.21E+09 & 1.90E+10 & 1.14E+10 & 11894.63 & 6.86\\ \hline
orkut & 3.07{\scriptsize E+}6 & 	2.24{\scriptsize  E+}8 & 9.78E+13 & 1.86E+13 & 1.51E+12 & 7.01E+10 & 4.78E+10 & 3.22E+09 & 70966.96 & 16.24\\ \hline
web-BerkStan &6.85{\scriptsize E+}5	& 6.65{\scriptsize  E+}6 &  3.82E+14 & 3.14E+10 & 4.76E+11 & 2.53E+10 & 9.86E+10 & 1.07E+09 & 6462.56 &3.77 \\ \hline
web-Google & 8.76{\scriptsize E+}5	& 4.32{\scriptsize  E+}6&  6.50E+11 & 4.06E+09 & 6.72E+09 & 3.80E+07 & 3.82E+08 & 3.99E+07 & 52.29 & 0.88\\ \hline
web-Stanford &  2.82{\scriptsize E+}5	& 1.99{\scriptsize  E+}6& 2.51E+13 & 1.28E+10 & 5.08E+10 & 4.48E+09 & 8.60E+09 & 7.88E+07 & 831.50 & 1.76\\ \hline
wiki-Talk & 2.39{\scriptsize E+}6	& 4.66{\scriptsize  E+}6 & 1.92E+14 & 1.17E+12 & 6.41E+10 & 9.24E+08 & 1.03E+09 & 6.49E+07 & 1346.76 & 1.04\\ \hline
youtube &  1.16{\scriptsize E+}6	& 4.95{\scriptsize  E+}6&  5.73E+12 & 9.15E+10 & 1.24E+10 & 2.32E+08 & 2.22E+08 & 4.99E+06 & 141.78 & 0.61 \\ \hline
\end{tabular}
}
\end{table*}

\textbf{Convergence of estimates:} To show  convergence, we perform detailed runs on the as-skitter graph.
We choose this because it is the most difficult to get accurate estimates, since the cycle-based motif
counts are small relative to the graph size. We vary the numbers of samples in increments of 2.5K. For each
choice of the number of samples, we perform 50 runs of our algorithm. 
We plot those results in \Fig{convergence}
for tailed-triangles, chordal-4-cycles, and 4-cliques. (Other patterns are omitted due the space considerations,
and had even better convergence.) The output of each run (for a given number of samples) is depicted by a blue dot.
For 4-clique counts, we can see the spread of outputs reducing. The figure only goes
up to 35K samples. (The convergence is so rapid that at around 50K samples, the spread is impossible to see.)

\textbf{Accuracy:}  \Fig{error} presents the relative errors for all 13 graphs and all 6 motifs, using 200K samples for 
both \estimate{} and \estcent.  \emph{All relative errors are less than 1\% in all instances.} As expected the relative errors tend to be larger for the less frequent patterns such as 4-cycles and 4-cliques.

\begin{figure*}[t]
    \centering
    \begin{subfigure}[b]{0.3\textwidth}
        \centering
        \includegraphics[width=\textwidth]{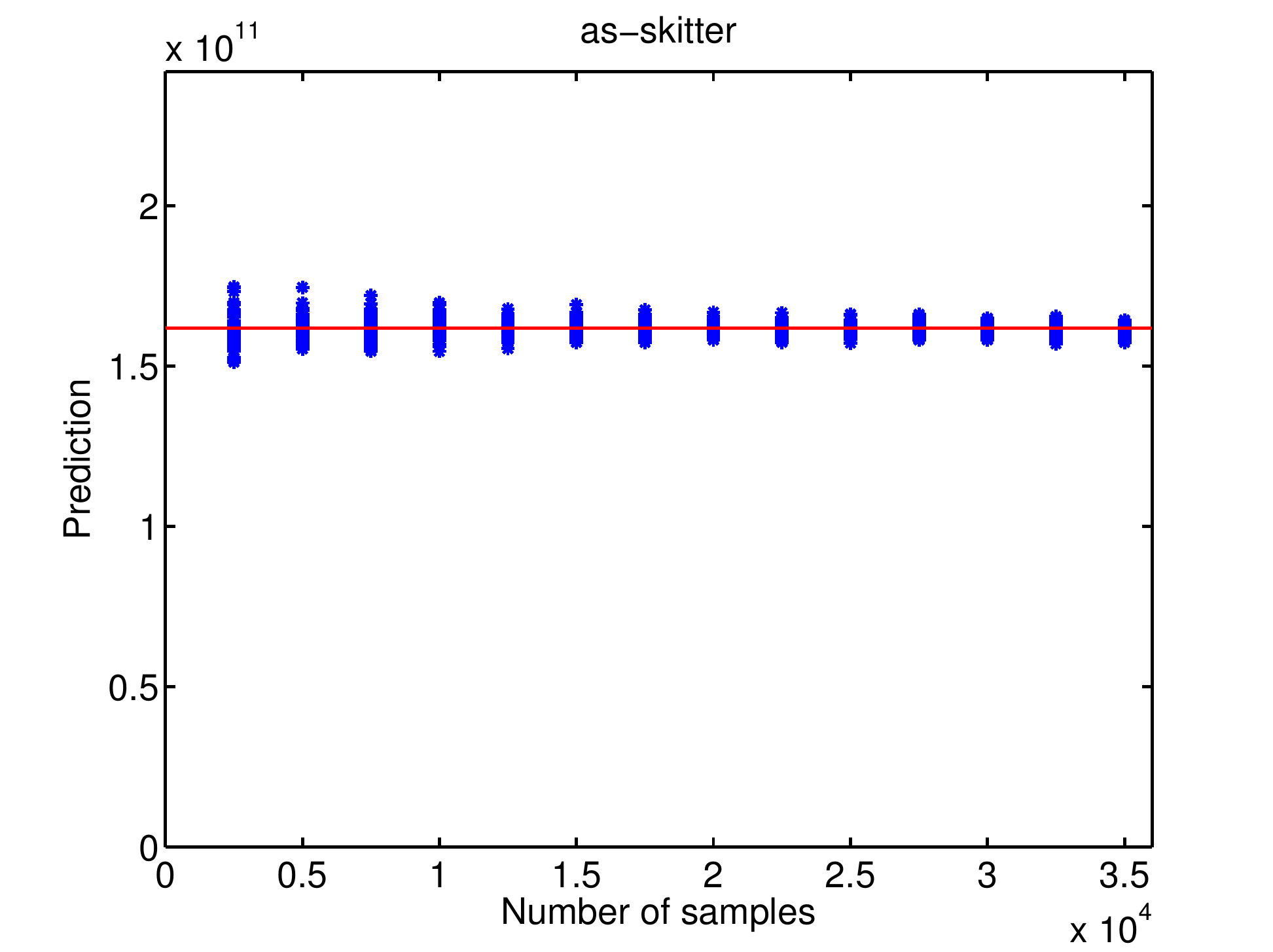}
        \caption{tailed-triangle}
        \label{fig:vp3}
    \end{subfigure}
    \hfill
    \begin{subfigure}[b]{0.3\textwidth}
        \centering
        \includegraphics[width=\textwidth]{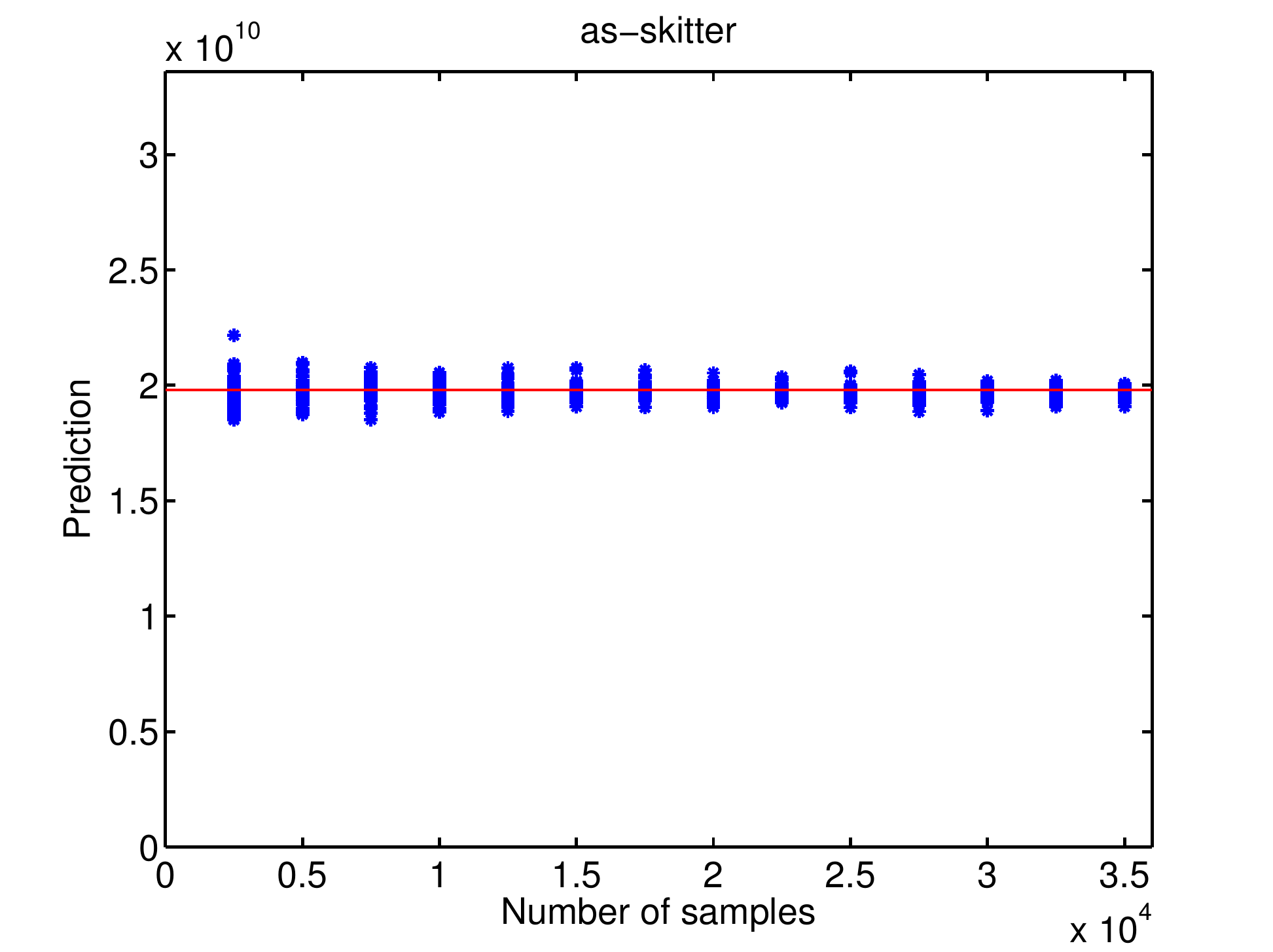}
        \caption{\footnotesize{chordal-4-cycle}}
        \label{fig:vp5}
    \end{subfigure}
    \hfill
    \begin{subfigure}[b]{0.3\textwidth}
        \centering
        \includegraphics[width=\textwidth]{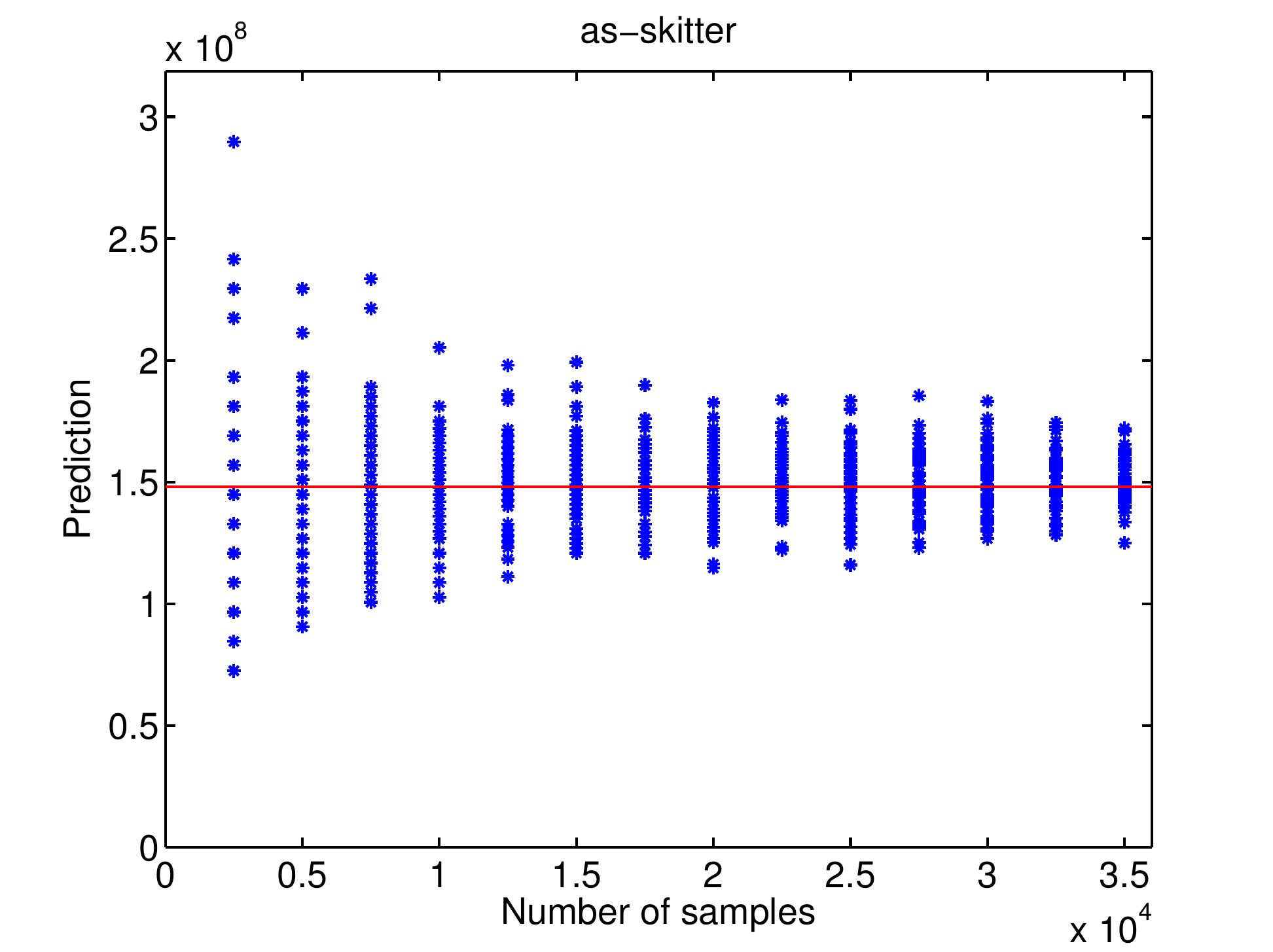}
        \caption{4-clique}
        \label{fig:vp6}
    \end{subfigure}
    \caption{Increasing number of samples decreases error: Each blue dot is an output of a run of the algorithm
    with the number of samples in the $x$-axis. The red line is the true count.}
    \label{fig:convergence}
\end{figure*}

\textbf{Speedup:}  \Fig{speedup} presented the speedups achieved over full enumeration  by using our path sampling algorithm. 
Enumeration for flickr and orkut takes order of a day. Since the motifs counts are in the order of tens of billions,
there is no hope of getting any scalability. Our algorithms takes less than a minute (even including I/O) for
all these graphs.

\textbf{The benefit of centered 3-paths:} We could simply use the basic 3-path sampling given in \estimate{} to approximate all counts.
We compare this approach to our final algorithm that use \estcent{} for $C_4, C_5, C_6$ estimates. Comparisons
between the relative errors for $C_4, C_5, C_6$
are given in \Fig{canvsvan}. (``Basic" denotes simply using \estimate, and ``centered" is the main algorithm.)
We used 200K samples for both algorithms. Some instances of using \estimate{} give somewhat large errors,
and \estcent{} really cuts these errors down. It shows the power of centered 3-path sampling.

\begin{figure*}[t]
    \centering
    \begin{subfigure}[b]{0.3\textwidth}
        \centering
        \includegraphics[width=\textwidth]{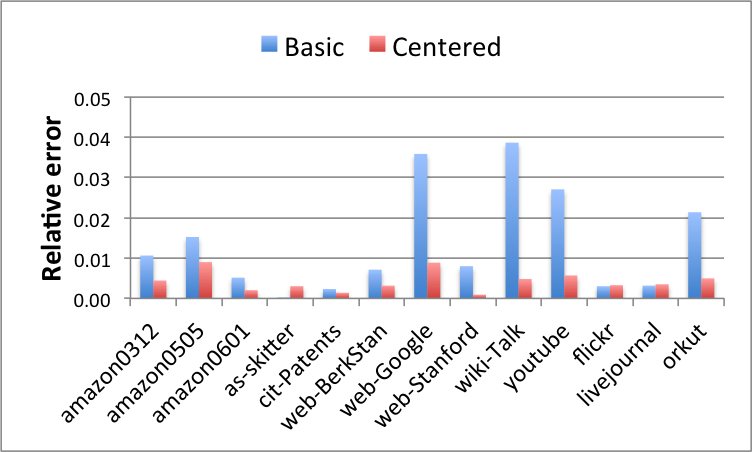}
        \caption{4-cycle}
        \label{fig:CvsVp4}
    \end{subfigure}
    \hfill
    \begin{subfigure}[b]{.3\textwidth}
        \centering
        \includegraphics[width=\textwidth]{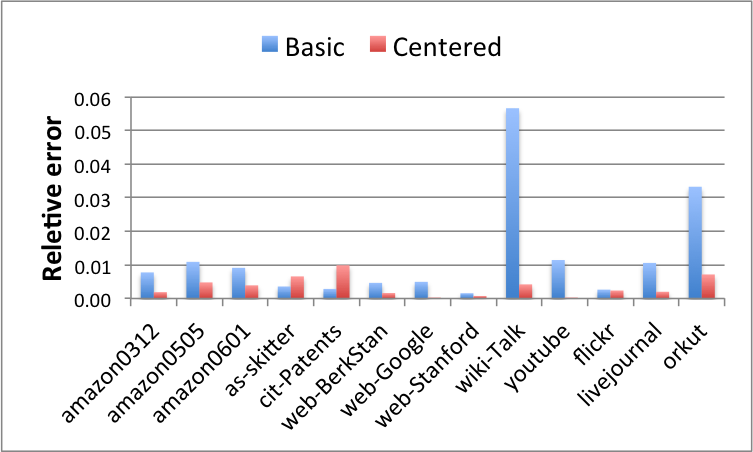}
        \caption{Chordal 4-cycle}
        \label{fig:CvsVp5}
    \end{subfigure}
    \hfill
    \begin{subfigure}[b]{0.3\textwidth}
        \centering
        \includegraphics[width=\textwidth]{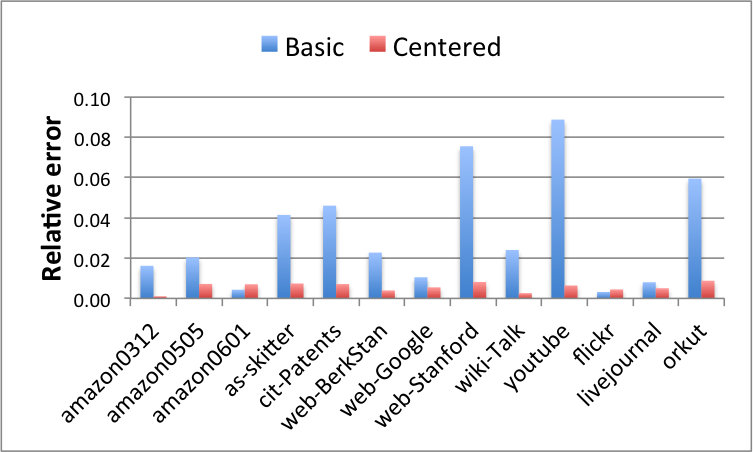}
        \caption{4-clique}
        \label{fig:CvsVp6}
    \end{subfigure}
    \caption{Comparing the accuracy of estimations using basic 3-path sampling and centered 3-path sampling}
        \label{fig:canvsvan}
\end{figure*}

\textbf{Error bounds:} We use  \Cor{cher} (as explained in \Sec{error}) to compute 99\% confidence error bounds
for all of our runs. So, for a single run of our algorithm on a candidate graph, we have a mathematical bound on the error that is solely based
on output estimates. (These are critical in the situation where we do not know the true answer, and need confidence that the estimates
are accurate.)
\Fig{errorbound} shows the accuracy of our error bounds with 99\% confidence, so $\delta = 0.01$ in \Cor{cher}.  
In all cases, the provable bounds on the error are always less than 10\% and mostly at most 5\%. (We stress that the actual error is much smaller.)
To the best of our knowledge, no previous sampling based algorithm for motif counting comes with hard mathematical error bars that are practically
reasonable.

\begin{figure}[t]
  \centering
  \includegraphics[width=.45\textwidth]{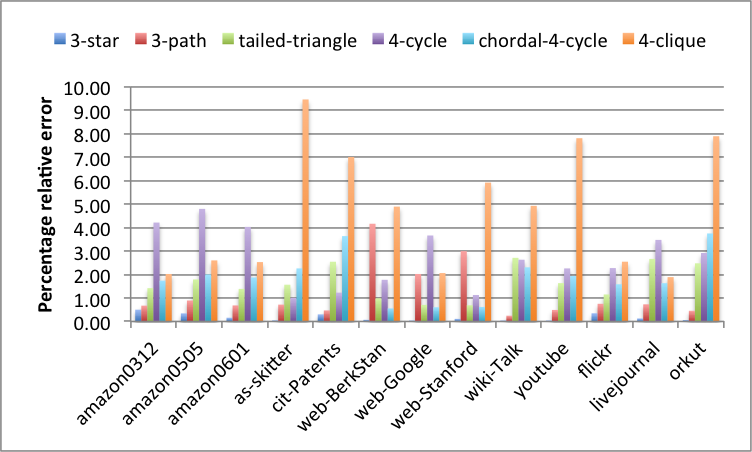}
      \caption{Provable error bounds}
      \label{fig:errorbound}
\end{figure}

\textbf{Trends in patterns:} The most frequent connected induced motif is the 
3-star. The least frequent is either the 4-cycle or the 4-clique. We find it intriguing
that (among cycle-based motifs) the chordal-4-cycle is the most frequent. (The orkut graph is a notable
exception in that 4-cycles are more frequent.) A future direction
is to connect these counts with the subgraph frequency approaches of~\cite{UganderBK13}.

\subsection{Comparison with previous work}
\label{sec:compare}

Here we present an empirical comparison between our proposed methods and other sampling based algorithms.  
We focus on color coding methods~\cite{HoBe+07,BetzlerBFKN11,ZhWaBu+12}, MCMC based sampling
algorithms~\cite{BhRaRa+12}, and edge sampling algorithms~\cite{RaBhHa14}. These methods are specifically designed for practicality and scalability. 
We give a short synopsis of these methods.

{\it GUISE \cite{BhRaRa+12}:} This employs a Markov Chain Monte Carlo (MCMC) method to uniformly sample a motif from the space of all (induced)  occurrences of motifs of sizes 3, 4, and 5. We use the implementation of GUISE from \url{http://cs.iupui.edu/~alhasan/software.html}. 

{\it  GRAFT~\cite{RaBhHa14}:} This algorithm samples a set of edges uniformly at random from the input graph and then counts the number of occurrences of each motif that uses any of these sampled edges. These counts can then be scaled appropriately to obtain unbiased estimates. 
Observe that if edges are sampled with probability 1, this gives an algorithm for exact counting using an edge iteration. 
For our experiments, we have used our own implementation. 

{\it Color Coding~\cite{AlYuZw94,HoBe+07,BetzlerBFKN11,ZhWaBu+12}:} This is a general technique for pattern counting, which samples to prune the enumeration search tree. 
We randomly color the vertices of the graph and then only enumerate over motifs all of whose vertices have distinct colors. These counts can be appropriately scaled to get unbiased estimates for the true frequencies. 
For our experiments, we implemented color coding (with $4$ colors), and used the same algorithmic techniques that we used for our enumeration algorithm. 

All these methods are quite general and work for motifs of any size. These algorithms
are more general than our approach. But our focus on the specific 
six subgraphs in \Fig{4-vertex-motifs} allows for the design of faster and more accurate algorithms,
that work better than these generic methods.

For a fair comparison, all running times for previous algorithms are for estimating frequencies of only the cycle-based motifs. 
(We are being conservative here, since we compare with the entire time for $3$-path sampling.)
Both GUISE and GRAFT take as input a number of samples. 
We ran GUISE for 10 million samples, since 
for fewer samples, the errors were usually around 100\%. At 10 million samples, the running time was 
more than the enumeration cost (usually over 5 times that), so we simply terminated.
We increased the number of samples for GRAFT
until the errors were within 5\% or the running time became larger than 5 times the enumeration cost.
For moderately sized graphs, the enumeration running time is typically
2-3 orders of magnitude more than that of $3$-path sampling. Color coding takes no parameters, and uses
$4$ colors. 

Our results are summarized in \Fig{prev-algs} (running time comparison) and \Tab{prev-algs-acc} (accuracy comparison).
For ease of presentation, we focus on 4 graphs: as-skitter, cit-patents, web-Stanford, and wiki-Talk. These range
from a million to 10 million edges. Our results were similar on other graphs. We present results for a single run
of each algorithm. (We ran for numerous iterations, and all results were consistent.)
In summary, our algorithm
is many orders of magnitude faster and more accurate than these methods.
We stress that these results are consistent with the literature, where previous methods were only
employed on graphs with about 100K edges.

\begin{figure}
    \centering
    	\includegraphics[scale=0.5]{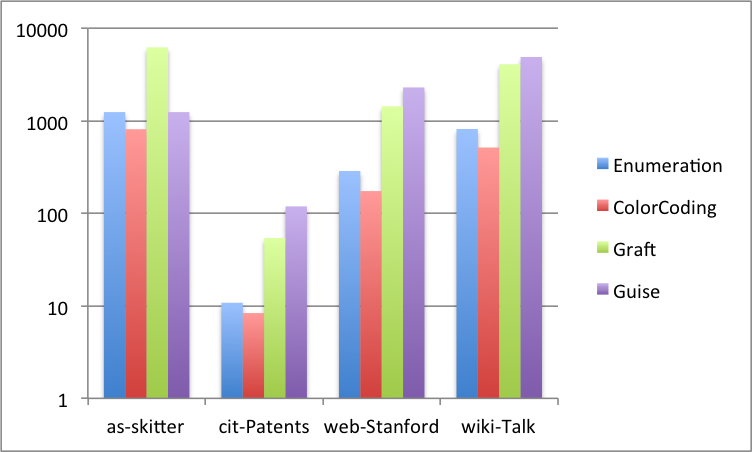}
  \caption{Runtimes of the existing algorithms normalized with respect to runtimes of 3-path sampling algorithm}
  \label{fig:prev-algs}
\end{figure}

\begin{asparaitem}
\item[\textbf{Scalability and speed.}] Previous work either employ Map-Reduce clusters
or max out at a million (or so) edges. Our $3$-path sampler runs on a single commodity machine
with 64GB memory, and easily handles graphs with more than a hundred million edges.

Running time comparisons are in \Fig{prev-algs}, where for each graph, we normalize by the running time of $3$-path sampler. 
This figure is limited to only moderate sized graphs due to large runtimes of the other algorithms.  
At about a million edges, $3$-path sampling is already hundreds of times faster than  other algorithms. 
(As we explain in the next section and \Tab{prev-algs-acc}, the accuracy of $3$-path sampling
is much better than GRAFT and GUISE, and comparable to color coding.)
We note that our enumeration code is actually competitive
with existing sampling algorithms. Color coding is consistently (a little) faster than enumeration.
GUISE and GRAFT do not yield accurate approximations even when run for many multiples of the enumeration time. We believe
that this is an issue of scale, since previous algorithm were not run for graphs with many millions of edges.

We provide some explanations for these runtimes. GRAFT provides a speedup over edge iteration enumeration procedures. 
However, edge-iteration algorithms are not the best methods  for enumeration, as is well-known for triangle enumeration~\cite{ScWa05,BeFo+14}.
Enumeration using vertex orderings is significantly faster. 
The same principles to apply to counting 4-vertex patterns. For small sizes (100K edges or less), GRAFT may
give improvements over the best enumeration, but this is not true for larger sizes. 

GUISE estimates the relative frequencies of the motifs, not the exact frequencies. GUISE has to perform relatively long random walks before it can obtain samples from the stationary distribution. This limits the number of samples that can be made. More problematically, the universe from which GUISE samples from uniformly is prohibitively large. In particular, focusing only on 4-vertex motifs, we see that (say for as-Skitter), the fraction of cliques in the universe of 4-vertex motifs is less than $10^{-6}$. This means that roughly a million samples are needed to witness a single 4-clique, 
and the square of a million samples to estimate accurately.

Color coding is hindered by the sheer size of the output. Color coding cuts down the set of motifs by enumerating only polychromatic motifs (i.e. all vertices with distinct colors), 
but this set is still quite large. In particular, for 4-vertex motifs, the reduction of the output is only of the order of one-tenth (or 3/32 to be precise -- the probability that four vertices of a motif all get different colors when using 4 colors). This means enumerating over a billion motifs in as-skitter, for example. 
There is also the extra overhead of actually searching through neighborhoods to perform this enumeration. We do see that color coding provides some benefit
over enumeration, and is probably the best algorithm  for $4$-vertex pattern counting among previous work.

\item[\textbf{Accuracy.}] We present the accuracies of the various algorithms in \Tab{prev-algs-acc}. We focus on 4-clique
counts for brevity. Even after running for times more than enumeration, GRAFT and GUISE give low accuracies. GUISE
generally fails to even find a 4-clique, and GRAFT has not processed enough samples to converge. 
Color coding is extremely accurate, and $3$-path sampling is competitive. But the running time of color coding is a hundred to thousand times
that of $3$-path sampling on these instances.

\begin{table}
\centering
\caption{Relative error in 4-clique count. For GUISE, we choose 10M samples. For GRAFT, we report the errors after running the algorithm for up to 5 times the enumeration time.}\label{tab:prev-algs-acc}
\begin{tabular}{|l|c|c|c|c|}
\hline
 & 3-path & Color Coding & Graft  & Guise\\ \hline
as-skitter &  0.7 & 0.1 & 95 &  99\\
cit-Patents &  0.7 & 0.4 & 78 &  54\\
web-Stanford &  0.8 & 0.8 & 29 & 99 \\
wiki-Talk &  0.3 & 0.5 & 5 &  99\\
\hline
\end{tabular}
\end{table}
\end{asparaitem}

\section{Conclusions and Future Work}
We get accurate results for all 4-vertex motif frequencies on a large number of graphs, and believe this is useful for motif analyses. 
Previous work usually focuses on a small, specific 
set of larger motifs~\cite{HoBe+07,BetzlerBFKN11,ZhWaBu+12}, or gives coarser approximations
for more motifs~\cite{BhRaRa+12}. 
It is natural to ask if we can extend this sampling scheme further to estimate counts of 5-vertex (or even higher order) motifs.

\end{document}